\newtheorem{theorem}{Theorem} 
\newtheorem{lemma}{Lemma}
\newtheorem{proposition}{Proposition}
\newtheorem{corollary}{Corollary}[proposition]
\newcommand{\myexpect}[1]{\mathbb{E}\left\lbrace #1 \right\rbrace}
\newcommand{\AuthorOne}{Wanchun Liu}
\newcommand{\AuthorTwo}{Xiangyun Zhou}
\newcommand{\AuthorThree}{Salman Durrani}
\newcommand{\AuthorFour}{Petar Popovski}
\newcommand{\ThankOne}{Wanchun Liu, Xiangyun Zhou, Salman Durrani are with the Research School of Engineering, the
	Australian National University, Canberra, ACT 2601, Australia 
	(emails: \{wanchun.liu, xiangyun.zhou, salman.durrani\}@anu.edu.au). 
	Petar Popovski is with the Department of Electronic Systems, Aalborg University, Aalborg 9220, Denmark (email: petarp@es.aau.dk). 
The work of X. Zhou was supported by the Australian Research Council's Discovery Project funding scheme (project number DP150103905).
	}
\begin{document}
%



%
%
%


\title{Secure Communication with a Wireless-Powered Friendly Jammer}
\author{\authorblockN{\AuthorOne,~\AuthorTwo,~\AuthorThree, and~\AuthorFour\thanks{\ThankOne}}}
\maketitle

\setcounter{page}{1}

\begin{abstract}
	In this paper, we propose to use a wireless-powered friendly jammer to enable secure communication between a source node and destination node, in the presence of an eavesdropper. 
	We consider a two-phase communication protocol with fixed-rate transmission. 
	In the first phase, wireless power transfer is conducted from the source to the jammer. 
	In the second phase, the source transmits the information-bearing signal under the protection of a jamming signal sent by the jammer using the harvested energy in the first phase. 
	We analytically characterize the long-term behavior of the proposed protocol and derive a closed-form expression for the throughput. 
	We further optimize the rate parameters for maximizing the throughput subject to a secrecy outage probability constraint. 
	Our analytical results show that the throughput performance differs significantly between the single-antenna jammer case and the multi-antenna jammer case. 
	For instance, as the source transmit power increases, 
	the throughput quickly reaches an upper bound with single-antenna jammer,
	while the throughput grows unbounded with multi-antenna jammer. 
	Our numerical results also validate the derived analytical results. 
\end{abstract}
%
\begin{IEEEkeywords}
Physical layer security, friendly jammer, cooperative jamming, wireless power transfer, throughput.
\end{IEEEkeywords}



\section{Introduction}
\subsection{Background and Motivation}
Physical layer security has been recently proposed as a complement to cryptography 
method to provide secure wireless communications \cite{Bookbloch,Bookzhou}. 
It is a very different paradigm where secrecy is achieved by exploiting the physical layer properties of the wireless communication system, especially interference and fading.
Several important physical layer security techniques have been investigated in the past decade (see a survey article \cite{survey_phy} and the references therein).
Inspired by cooperative communication without secrecy constraints, user cooperation is a promising strategy for improving secrecy performance. 
There are mainly two kinds of cooperation: cooperative relaying and cooperative jamming.
As for cooperative relaying, the well-known decode-and-forward and amplify-and-forward schemes were discussed in \cite{dongAF,vaneet,Jiangyuan} with secrecy considerations.
Following the idea of artificial noise in \cite{Negi},
cooperative jamming was investigated as an effective method to enhance secrecy \cite{Yener,Petar,Krikidis,gan,Rongqing,dong,huang,another_ali,JunYang}.
In this scheme, a friendly jammer transmits a jamming signal 
to interfere with the eavesdropper's signal reception 
at the same time when the source transmits the message signal.
In \cite{Yener,Petar,Krikidis}, the authors focused on the design of a single-antenna jammer. 
In \cite{gan} and \cite{Rongqing}, multiple single-antenna jammers were considered to generate distributed cooperative jamming signals. 
In \cite{dong}, the authors studied multi-antenna jammer (called relay in \cite{dong}) in secure wireless networks. 
Motivated by this work, the authors in \cite{huang,another_ali,JunYang} considered multi-antenna jammers in MIMO (multiple-input and multiple-output) networks.
\par
In many wireless network applications, communication nodes may not have connection to power lines due to mobility or other constraints. Thus, their lifetime is usually constrained by energy stored in the battery.
In order to prolong the lifetime of energy-constrained wireless nodes, 
 energy harvesting has been proposed as a very promising approach \cite{Varshney,Grover}. 
 {Conventional
 	energy harvesting methods rely on various renewable energy sources in the environment, such
 	as solar, vibration, thermoelectric and wind, thus are usually uncontrollable.}
%
For a wireless communication environment, harvesting energy from radio-frequency (RF) signals has recently attracted a great deal of attention \cite{WPT_survey,kaibin_zhou,bi}. 
A new energy harvesting solution called wireless power transfer is adopted in recent research 
on energy-constrained wireless networks \cite{ZhangHo,XunZhou,LiangLiu,JuOld,Ali_odd,Lee}. 
Generally speaking, the key idea is that a wireless node could capture RF signal sent by a source node and convert it into direct current to charge its battery, then use it for signal processing or transmission. 
In \cite{ZhangHo,XunZhou,LiangLiu}, the authors considered the scenario that the destination  simultaneously receives wireless information and harvests wireless power transfered by the source. 
Motivated by these works, the authors in \cite{JuOld,Ali_odd,Lee} studied 
how the wireless nodes can make use of the harvested energy from wireless power transfer to enable communications.
The wireless power transfer process can be fully
controlled, and hence, can be used in wireless networks with critical quality-of-service constrained applications, such as secure wireless communications.
%
%
In \cite{HongXing_secure,LiangLiu_secure}, the authors considered secure communications with one information receiver and one (or several) wireless energy-harvesting eavesdropper(s). 
In \cite{china_TVT}, the authors studied the coexistence of 
three destination types in a network: an information receiver, a receiver for harvesting wireless energy and an eavesdropper.
%
%
In \cite{Ng_secure}, the authors considered the wireless communication network with eavesdroppers and two types of legal receivers which can receive information and harvest wireless energy at the same time: desired receiver and idle receiver, while the idle receivers are treated as potential eavesdroppers.
All these works on secure communication did not explicitly study how the harvested energy at the receiver is used. 

\subsection{Our Work and Contribution}
This paper considers a scenario that the network designer wants to establish secure communication between a pair of source-destination devices with minimal cost.
		To this end, a simple passive device is deployed nearby as a helper.
		Such a device does not have connection to power line and is only activated during secure communication.
		The requirements of simplicity and low cost bring important challenges:
		the helping device should have low complexity in its design and operation, with a low-cost energy harvesting method to enable its operation when needed. 
		Consequently, the helping device should ideally have very little workload of online computation and minimal coordination or information exchange with the source-destination pair.		
\par	
		To solve the above-mentioned secure communication design problem, we propose to use a wireless-powered friendly jammer as the helping device, where the jammer harvests energy via wireless power transfer from the source node.
		The energy harvesting circuit (consisting of diode(s) and a passive low-pass filter~\cite{WPT_survey,XunZhou}) is very simple and cost effective.
		More importantly, such a design allows us to control the energy harvesting process for the jammer, which is very different from the conventional energy harvesting methods that rely on uncontrollable energy sources external to the communication network.
		We use a simple time-switching protocol~\cite{ZhangHo,JuOld,Ali}{, where power transfer (PT) and information transmission (IT) are separated in time.
		In this regard, the time allocation between PT and IT must be carefully designed in order to achieve the best possible throughput performance.
		We solve this problem by optimizing the jamming power, which indirectly gives the best time allocation for achieving the maximum throughput while satisfying a given secrecy constraint.
		We further optimize the rate parameters of secure communication.
		All design parameters are optimized offline with only statistical knowledge of the wireless channels.		
	}

The main contributions of this work are summarized below:
\begin{itemize}
	\item 
	The novelty of the work lies in the design of a communication protocol that provides secure communication using an energy-constrained jamming node wirelessly powered by the source node.	
	The protocol sets a target jamming power and switches between IT and PT depending on whether the available energy at the jammer meets the target power or not.
	\item 
	We study the long-term behavior of the proposed communication protocol and 
	derive a closed-form expression of the probability of IT. 
	Based on this, we obtain the achievable throughput of the protocol with fixed-rate transmission.
	\item 
	We optimize the rate parameters to achieve the maximum throughput while satisfying a constraint on the secrecy outage probability. 
	Further design insights are obtained by considering the high SNR regime and the large number of antennas regime. 
	We show that when  the jammer has a single antenna, increasing the source transmit power 
	quickly makes the throughput converge to an upper bound. 
	However, when the jammer has multiple antennas, increasing the source transmit power or the number of jammer antennas improves the throughput significantly.
\end{itemize}

Our work is different from the following most related studies: 
	In~\cite{XiZhang}, the authors considered a MISO secure communication scenario, without the help of an individual jammer.
		Different from~{\cite{XiZhang}}, we consider using wireless-powered jammer to help the secure communication.
		Therefore, in our analysis, we study the cooperation of jammer and source and design the protocol to balance the time spent on PT and IT, in order to achieve the maximum throughput of the secure communication.
	In~\cite{Ali}, the authors considered using a wireless-powered relay to help the point-to-point communication.
		Different from~\cite{Ali}, we consider a secure communication scenario.
	In our analysis, we optimize the jamming power and rate parameters for secure communication, which was not considered in \cite{Ali}.	
	In~\cite{Muk}, the authors designed jamming signal of energy harvesting jammer to help the secure communication based on the knowledge of the uncontrollable energy harvesting process.
		Different from~\cite{Muk}, we consider using wireless-powered jammer where the wireless power transfer process is totally controllable. In our work, we jointly design the wireless power transfer process and the communication process. Therefore, the design approach is fundamentally different between~\cite{Muk} and our work.

\par
The remainder of this paper is organized as follows.
Section II presents the system model.
Section III proposes the secure communication protocol.
Section IV analyzes the protocol and derives the achievable throughput.
Section V formulates an optimization problem for secrecy performance, and gives the optimal design.
Section VI presents numerical results.
Finally, conclusions are given in Section VII.\par

\section{System Model}
We consider a communication scenario where a source node ($S$) communicates with a destination node ($D$) in the presence of a passive eavesdropper ($E$) with the help of a friendly jammer ($J$), as illustrated in Fig. \ref{fig:Sys}. 
We assume that the jammer has $N_J$ antennas ($N_J \geq 1$), while all the other nodes are equipped with a single antenna only.
Also we assume that the eavesdropper is just another communication node in the same network which should not have access to the information transmitted from the source to the destination. Therefore, the locations of all nodes are public knowledge.

\subsection{Jammer Model}
In this work, the jammer is assumed to be an energy constrained node with no power of its own and having a rechargeable battery with infinite capacity \cite{yang_packet,LiangLiu,Ali}. In order to make use of the jammer, the source node wirelessly charges the jammer via wireless power transfer. Once the jammer harvests sufficient energy, it can be used for transmitting friendly jamming signals to enhance the security of the communication between the source and the destination. We assume that the jammer's energy consumption is dominated by the jamming signal transmission, while the other energy consumption, e.g.,  due to the signal processing, is relatively insignificant and hence ignored for simplicity \cite{Ali_odd,XunZhou}.

\begin{figure}[t]
	\renewcommand{\captionlabeldelim}{ }
	\renewcommand{\captionfont}{\small} \renewcommand{\captionlabelfont}{\small}
	\centering
	\usetikzlibrary{arrows}
	\usetikzlibrary{arrows}
	\begin{tikzpicture}[scale = 0.45]

\draw [ultra thick] (-8.4,3.35) ellipse (0.5 and 0.5) node{$S$};
\draw [ultra thick] (-6.95,0.8) ellipse (0.5 and 0.5)node{$J$};
\draw [ultra thick] (-3.1,3.7) ellipse (0.5 and 0.5)node{$D$};
\draw [ultra thick,fill] (-3.9,1.25) ellipse (0.5 and 0.5);
\node [white] at (-3.9,1.25) {$\bm E$};

\draw (-7.95,1.95) -- (-7.55,1.95) -- (-7.75,1.8) node (v1) {} -- (-7.95,1.95);
\draw (-7.2,1.95) -- (-6.8,1.95) -- (-7,1.8) node (v3) {} -- (-7.2,1.95);

\draw (-7.75,1.55) node (v2) {} -- (-7,1.55) node (v4) {};

\draw (-7.75,1.8) -- (-7.75,1.55);
\draw (-7,1.8) -- (-7,1.55);
\node at (-7.38,1.75) {...};
\draw (-7.4,1.55) -- (-7.4,1.1);

\draw [-triangle 60,  ultra thick](-8.05,2.85) -- (-7.55,2.1);

\draw [ultra thick](-9.4,0) -- (10,0);
\draw [ultra thick](-0.5,0) -- (-0.5,4.35);

\draw [ultra thick] (2.05,3.35) ellipse (0.5 and 0.5) node{$S$};
\draw [ultra thick] (3.5,0.8) ellipse (0.5 and 0.5)node{$J$};
\draw [ultra thick] (7.35,3.7) ellipse (0.5 and 0.5)node{$D$};
\draw [ultra thick,fill] (6.55,1.25) ellipse (0.5 and 0.5);
\node [white] at (6.55,1.25) {$\bm E$};
\draw (2.6,2) -- (3,2) -- (2.8,1.85) node (v1) {} -- (2.6,2);
\draw (3.35,2) -- (3.75,2) -- (3.55,1.85) node (v3) {} -- (3.35,2);

\draw (2.8,1.6) node (v2) {} -- (3.55,1.6) node (v4) {};

\draw (2.8,1.85) -- (2.8,1.6);
\draw (3.55,1.85) -- (3.55,1.6);
\node at (3.17,1.8) {...};
\draw (3.15,1.6) -- (3.15,1.25);

\draw [-stealth,  ultra thick, dotted](2.65,3.3) node (v5) {} -- (6.05,1.5);

\draw [-stealth,  ultra thick](2.7,3.3) -- (6.75,3.7);

\draw [loosely dashed,  -latex, ultra thick](3.9,1.9) node (v6) {} -- (6.6,3.45);
\draw [loosely dashed,  -latex, ultra thick](v6) -- (5.8,1.4);

\node [font=\small] at (-5.3,-0.6) {Power Transfer (PT) Block};
\node [font=\small] at (4.6,-0.6) {Information Transmission (IT) Block};

\draw [ultra thick] (-8.4,-1.85) ellipse (0.5 and 0.5)node{$S$};
\draw [ultra thick] (-8.4,-3.05) ellipse (0.5 and 0.5)node{$D$};
\draw [ultra thick] (-8.4,-4.25) ellipse (0.5 and 0.5)node{$J$};
\draw [ultra thick,fill] (-8.4,-5.45) ellipse (0.5 and 0.5);
\node [white] at (-8.4,-5.45) {$\bm E$};
\node [right,font=\small] at (-7.8,-1.85) {Source node};
\node [right,font=\small] at (-7.8,-3.05) {Destination node};
\node [right,font=\small] at (-7.8,-4.25) {Jamming node};
\node [right,font=\small] at (-7.8,-5.45) {Eavesdropper node};

\draw [ultra thick, -triangle 60](-0.55,-1.85) -- (0.75,-1.85);
\draw [ultra thick,  -stealth](-0.55,-3.15) -- (0.75,-3.15);
\draw [loosely dashed,  -latex, ultra thick](-0.55,-4.3) -- (0.75,-4.3);
\draw [-stealth,  ultra thick, dotted](-0.55,-5.5) -- (0.75,-5.5);

\node [right,font=\small] at (0.85,-1.85) {Power transfer link};
\node [right,font=\small] at (0.85,-3.15) {Information transmission link};
\node [right,font=\small] at (0.85,-4.3) {Jamming channels};
\node [right,font=\small] at (0.85,-5.5) {Eavesdropper link};

	\end{tikzpicture}
	\vspace*{-0.5cm}
	\caption{System model with illustration of the power transfer and information transmission phases.}
	\label{fig:Sys}
	\vspace*{-0.4cm}
\end{figure}
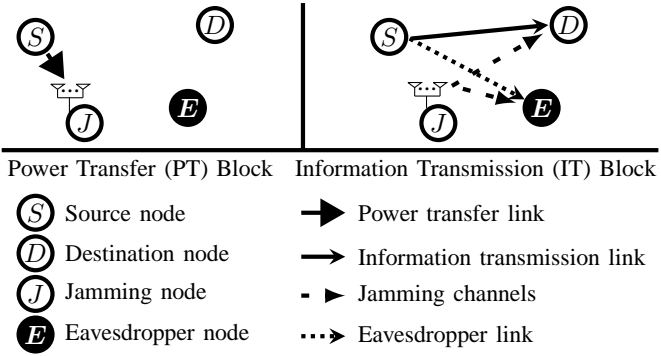

\subsection{Channel Assumptions}
We assume that all the channel links are composed of large-scale path loss with exponent $m$ and statistically independent small-scale Rayleigh fading. 
We denote the inter-node distance of links $S\rightarrow J$, $S\rightarrow D$, $J\rightarrow D$, $S\rightarrow E$ and $J\rightarrow E$ by $d_{SJ}$, $d_{SD}$, $d_{JD}$, $d_{SE}$ and $d_{JE}$, respectively. 
The fading channel gains of the links $S\rightarrow J$, $S\rightarrow D$, $S\rightarrow E$, $J\rightarrow E$ and $J\rightarrow D$ are denoted by $\bm h_{SJ}$, $h_{SD}$, $h_{SE}$, $\bm h_{JE}$, $\bm h_{JD}$, respectively. 
These fading channel gains are modeled as quasi-static frequency non-selective parameters, which means that they are constant over the block time of $T$ seconds and independent and identically distributed between blocks. 
Consequently, each element of these complex fading channel coefficients are circular symmetric complex Gaussian random variables with zero mean and unit variance.
In this paper, 
we make the following assumptions on channel state information (CSI) and noise power:
\begin{itemize}
	\item The CSI ($h_{SD}$ and $\bm h_{JD}$) is assumed to be perfectly available at both the transmitter and receiver sides. This allows benchmark system performance to be determined.	
	\item The CSI of the eavesdropper is only known to itself.
	\item Noise power at the eavesdropper is zero in line with \cite{zhou_trans}, which corresponds to the worst case scenario.
\end{itemize}
\subsection{Transmission Phases}
The secure communication with wireless-powered jammer takes places in two phases: (i) power transfer (PT) phase and (ii) information transmission (IT) phase, as shown in Fig. \ref{fig:Sys}.
During the PT phase, the source transfers power to the jammer by sending a radio signal with power $\mathcal{P}_s$. The jammer receives the radio signal, converts it to a direct current signal and stores the energy in its battery. 
During the IT phase, the jammer sends jamming signal to the eavesdropper with power $\mathcal{P}_J$ by using the stored energy in the battery.
At the same time,
the source transmits the information signal to the destination with power $\mathcal{P}_s$ under the protection of the jamming signal.
%
We define the \textit{information transmission probability} as the probability of 
the communication process being in the IT phase and denote it by~$p_{tx}$.

\subsection{Secure Encoding and Performance Metrics}
We consider confidential transmission 
between the source and the destination, 
using Wyner's wiretap code \cite{wyner}. 
Specifically, there are two rate parameters of the wiretap code, namely the rate of codeword transmission, denoted by $R_t$, and the rate of secret information, denoted by $R_s$. 
%
The positive rate difference $R_t - R_s$ is the cost to provide secrecy against the eavesdropper. 
A $M$-length wiretap code is constructed by generating $2^{M R_t}$ codewords $x^{M}\left(w,v\right) $ of the length $M$, where $w = 1,2,...,2^{M R_s}$ and $v = 1,2,...,2^{M \left(R_t-R_s\right)}$.
For each message index $w$, the value of $v$ is selected randomly with uniform probability from $\left\lbrace 1,2,...,2^{M\left(R_t-R_s\right)} \right\rbrace$, and the constructed codeword to be transmitted is $x^M\left(w,v\right)$. 
Clearly, reliable transmission from the source to the destination cannot be achieved when $R_t> C_d$, where $C_d$ denotes the channel capacity of $S \rightarrow D$ link. This event is defined as connection outage event. 
From \cite{wyner}, perfect secrecy cannot be achieved when $R_t - R_s < C_e$, where $C_e$ denotes the fading channel capacity of $S\rightarrow E$ link. This event is defined as secrecy outage event. 
In this work, we consider fixed rate transmission, which means $R_t$ and $R_s$ are fixed and chosen offline following \cite{Tang,Zhou_rethinking}. 

Since we consider quasi-static fading channel, we use outage based measures as considered in \cite{Tang,Zhou_rethinking}. Specifically, the connection outage probability and secrecy outage probability are defined, respectively, as
\begin{align}
\label{p_co}
p_{co}=&\mathbb{P}\left\lbrace R_t  > C_d  \right \rbrace,\\
\label{p_so_first}
p_{so}=&\mathbb{P}\left\lbrace R_t-R_s < C_e  \right \rbrace,
\end{align}
where $\mathbb{P}\left\lbrace \nu \right\rbrace$ denotes the probability for success of event $\nu$. 
Note that 
the connection outage probability is a measure of the fading channel quality of the $S \rightarrow D$ link. 
Since the current CSI is available at the legitimate nodes, the source can always suspend transmission when connection outage occurs. 
This is easy to realize by one-bit feedback from the destination. 
Therefore, in this work, connection outage leads to suspension of IT but not decoding error at the destination. 
%
\par

Our figure of merit is the throughput, $\pi$, 
which is the average number of bits of confidential information received at the destination per unit time \cite{XiZhang,Zhou_rethinking}, and is given by
\begin{equation} \label{defi_B}
\pi = p_{tx}  R_s.
\end{equation}
As we will see in Section IV, the information transmission probability $p_{tx}$ contains the connection outage probability $p_{co}$.
\par
It is important to note that a trade-off exists between throughput achieved at the destination and secrecy against the eavesdropper (measured by the secrecy outage probability). 
For example, increasing $R_s$ would increase $\pi$ in \eqref{defi_B}, but also increase $p_{so}$ in \eqref{p_so_first}. 
This trade-off will be investigated later in Section V in this paper. 
%
\section{Proposed Secure Communication Protocol}

In this section, we propose a simple fixed-power and fixed-rate secure communication protocol, which employs a wireless-powered jammer. Note that more sophisticated power and rate adaptation strategies at the source are possible but outside the scope of this paper.

\subsection{Transmission Protocol}
We consider the communication in blocks of $T$ seconds, each block being either a PT or an IT block. 
Intuitively, IT should happen when the jammer has sufficient energy for jamming and the $S\rightarrow D$ link is in a good condition to ensure successful information reception at the destination. 
We define the two conditions for a block to be used for IT as follows:
\begin{itemize}
	\item 
	
	At the beginning of the block, 
	the jammer has enough energy, $\mathcal{P}_J T$, to support jamming with power $\mathcal{P}_J$ over an information transmission block of $T$ seconds, and
	
	\item the link $S\rightarrow D$ does not suffer connection outage, which means it can support the codeword transmission rate $R_t$ from the source to the destination.
\end{itemize}

Note that both conditions are checked at the start of each block using the knowledge of the actual amount of energy in the jammer's battery and the instantaneous CSI of $S\rightarrow D$ link, and both conditions must be satisfied simultaneously for the block to be an IT block.
If the first condition is not satisfied, then the block is used for PT and we refer to it as a \textit{dedicated PT block}. 
If the first condition is satisfied while the second condition is not, then the block is still used for PT but we refer it as an \textit{opportunistic PT block}. Note that $\mathcal{P}_J$ is a design parameter in the proposed protocol.
\par

For an accurate description of the transmission process, we define a PT-IT cycle as a sequence of blocks which either consists of a single IT block or a sequence of PT blocks followed by an IT block. 
Let discrete random variables $X$ and $Y$ ($X,Y=0,1,2,...$) denote the number of dedicated and opportunistic PT blocks in a PT-IT cycle, respectively.
In our proposed protocol, the following four types of PT-IT cycles are possible:
\begin{enumerate}
	\item $X>0,\ Y=0$, i.e., PT-IT cycle contains $X$ dedicated PT blocks followed by an IT block. This is illustrated as the $k\textrm{th}$ PT-IT cycle in Fig. \ref{fig:Xyblock}.
	\item $X>0,\ Y>0$, i.e., PT-IT cycle contains $X$ dedicated PT blocks and $Y$ opportunistic PT blocks followed by an IT block. This is illustrated as the $\left(k+1\right)\textrm{th}$ PT-IT cycle in Fig. \ref{fig:Xyblock}.
	\item $X=0,\ Y>0$, i.e., PT-IT cycle contains $Y$ opportunistic PT blocks followed by an IT block. This is illustrated as the $\left(k+2\right)\textrm{th}$ PT-IT cycle in Fig. \ref{fig:Xyblock}.
	\item $X=0,\ Y=0$, i.e., PT-IT cycle contains one IT block only. This is illustrated as the $\left(k+3\right)\textrm{th}$ PT-IT cycle in Fig. \ref{fig:Xyblock}.
\end{enumerate}

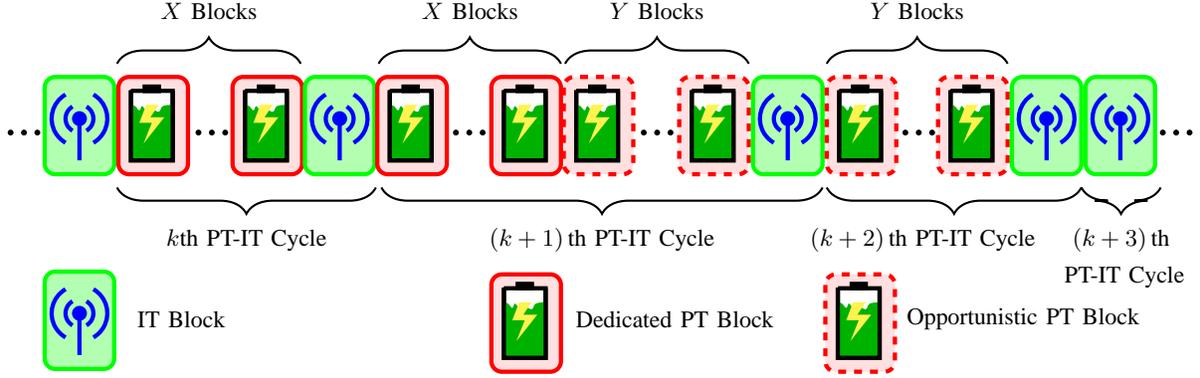
\begin{figure*}[!t]
	\renewcommand{\captionlabeldelim}{ }
	\renewcommand{\captionfont}{\small} \renewcommand{\captionlabelfont}{\small}
	\centering
	\begin{tikzpicture}[scale=0.37]
	\draw  [green,ultra thick,draw opacity = 100, fill=green!30,rounded corners,ultra thick] (-18.5,3) rectangle (-16,-0.5);
	
	\node [circle,fill=blue,scale=0.6] at (-17.25,1.5) {};
	\draw [blue,ultra thick](-17.75,2.366) arc (120.0007:240:1);
	\draw [blue,ultra thick](-16.75,2.366) arc (59.9993:-60:1);
	
	\draw  [blue,ultra thick] (-17.5,1.933) arc (120.0007:240:0.5);
	\draw  [blue,ultra thick] (-17,1.933) arc (59.9993:-60:0.5);
	
	\draw  [blue,ultra thick] (-17.25,1.5) -- (-17.25,0);

	\draw  [red,ultra thick,draw opacity = 100, fill=pink!50,rounded corners,ultra thick]  (-15.85,3) rectangle (-13.35,-0.5);
	
	\draw [fill=white,ultra thick] (-15.4,2.53)  -- (-15.4,0) node (v2) {} -- (-13.9,0) node (v8) {} -- (-13.9,2.5) -- (-15.4,2.5);
	\draw [fill = white,ultra thick]  (-14.9,2.65) rectangle (-14.4,2.5);
	%
	\draw [fill=black!30!green,draw opacity=0] plot[smooth, tension=.7] coordinates {(-15.4,2) (-15.25,1.95) (-15.1,1.8) (-15,1.95) (-14.85,1.8) (-14.65,1.85) (-14.4,1.75) (-14.25,1.95) (-13.9,2)}--(-13.9,1.7) -- (-13.9,0) -- (-15.4,0) -- (-15.4,2);
	\draw [ultra thick] (-13.9,2.5) -- (-13.9,0) -- (-15.4,0) -- (-15.4,2.5);
	\draw [white, fill=yellow!80,draw opacity=0](-14.9,2)  -- (-14.4,2) -- (-14.7,1.55) -- (-14.2,1.55) -- (-14.95,0.6) -- (-14.65,1.35) -- (-15.1,1.35) -- (-14.85,2);
	\draw [ultra thick,]  (-14.9,2.65) rectangle (-14.4,2.5);
	%
	%
	\draw  [red,ultra thick,draw opacity = 100, fill=pink!50,rounded corners,ultra thick]  (-11.75,3) rectangle (-9.25,-0.5);
	
	\draw [fill=white,ultra thick] (-11.3,2.53)  -- (-11.3,0) node (v2) {} -- (-9.8,0) node (v8) {} -- (-9.8,2.5) -- (-11.3,2.5);
	\draw [fill = white,ultra thick]  (-10.8,2.65) rectangle (-10.3,2.5);
	%
	\draw [fill=black!30!green,draw opacity=0] plot[smooth, tension=.7] coordinates {(-11.3,2) (-11.15,1.95) (-11,1.8) (-10.9,1.95) (-10.75,1.8) (-10.55,1.85) (-10.3,1.75) (-10.15,1.95) (-9.8,2)}--(-9.8,1.7) -- (-9.8,0) -- (-11.3,0) -- (-11.3,2);
	\draw [ultra thick] (-9.8,2.5) -- (-9.8,0) -- (-11.3,0) -- (-11.3,2.5);
	\draw [white, fill=yellow!80,draw opacity=0](-10.8,2)  -- (-10.3,2) -- (-10.6,1.55) -- (-10.1,1.55) -- (-10.85,0.6) -- (-10.55,1.35) -- (-11,1.35) -- (-10.75,2);
	\draw [ultra thick,]  (-10.8,2.65) rectangle (-10.3,2.5);

	\draw  [green,ultra thick,draw opacity = 100, fill=green!30,rounded corners,ultra thick] (-9.15,3) rectangle (-6.65,-0.5);
	
	\node [circle,fill=blue,scale=0.6] at (-7.9,1.5) {};
	\draw [blue,ultra thick](-8.4,2.366) arc (120.0007:240:1);
	\draw [blue,ultra thick](-7.4,2.366) arc (59.9993:-60:1);
	
	\draw  [blue,ultra thick] (-8.15,1.933) arc (120.0007:240:0.5);
	\draw  [blue,ultra thick] (-7.65,1.933) arc (59.9993:-60:0.5);
	
	\draw  [blue,ultra thick] (-7.9,1.5) -- (-7.9,0);
	\draw  [red,ultra thick,draw opacity = 100, fill=pink!50,rounded corners,ultra thick]  (-6.55,3) rectangle (-4.05,-0.5);
	
	\draw [fill=white,ultra thick] (-6.1,2.53)  -- (-6.1,0) node (v2) {} -- (-4.6,0) node (v8) {} -- (-4.6,2.5) -- (-6.1,2.5);
	\draw [fill = white,ultra thick]  (-5.6,2.65) rectangle (-5.1,2.5);
	%
	\draw [fill=black!30!green,draw opacity=0] plot[smooth, tension=.7] coordinates {(-6.1,2) (-5.95,1.95) (-5.8,1.8) (-5.7,1.95) (-5.55,1.8) (-5.35,1.85) (-5.1,1.75) (-4.95,1.95) (-4.6,2)}--(-4.6,1.7) -- (-4.6,0) -- (-6.1,0) -- (-6.1,2);
	\draw [ultra thick] (-4.6,2.5) -- (-4.6,0) -- (-6.1,0) -- (-6.1,2.5);
	\draw [white, fill=yellow!80,draw opacity=0](-5.6,2)  -- (-5.1,2) -- (-5.4,1.55) -- (-4.9,1.55) -- (-5.65,0.6) -- (-5.35,1.35) -- (-5.8,1.35) -- (-5.55,2);
	\draw [ultra thick,]  (-5.6,2.65) rectangle (-5.1,2.5);
	
	\draw  [red,ultra thick,draw opacity = 100, fill=pink!50,rounded corners,ultra thick]  (-2.45,3) rectangle (0.05,-0.5);
	
	\draw [fill=white,ultra thick] (-2,2.53)  -- (-2,0) node (v2) {} -- (-0.5,0) node (v8) {} -- (-0.5,2.5) -- (-2,2.5);
	\draw [fill = white,ultra thick]  (-1.5,2.65) rectangle (-1,2.5);
	%
	\draw [fill=black!30!green,draw opacity=0] plot[smooth, tension=.7] coordinates {(-2,2) (-1.85,1.95) (-1.7,1.8) (-1.6,1.95) (-1.45,1.8) (-1.25,1.85) (-1,1.75) (-0.85,1.95) (-0.5,2)}--(-0.5,1.7) -- (-0.5,0) -- (-2,0) -- (-2,2);
	\draw [ultra thick] (-0.5,2.5) -- (-0.5,0) -- (-2,0) -- (-2,2.5);
	\draw [white, fill=yellow!80,draw opacity=0](-1.5,2)  -- (-1,2) -- (-1.3,1.55) -- (-0.8,1.55) -- (-1.55,0.6) -- (-1.25,1.35) -- (-1.7,1.35) -- (-1.45,2);
	\draw [ultra thick,]  (-1.5,2.65) rectangle (-1,2.5);
	
	\draw  [dashed,red,ultra thick,draw opacity = 100, fill=pink!50,rounded corners,ultra thick]  (0.15,3) rectangle (2.65,-0.5);
	
	\draw [fill=white,ultra thick] (0.6,2.53)  -- (0.6,0) node (v2) {} -- (2.1,0) node (v8) {} -- (2.1,2.5) -- (0.6,2.5);
	\draw [fill = white,ultra thick]  (1.1,2.65) rectangle (1.6,2.5);
	%
	\draw [fill=black!30!green,draw opacity=0] plot[smooth, tension=.7] coordinates {(0.6,2) (0.75,1.95) (0.9,1.8) (1,1.95) (1.15,1.8) (1.35,1.85) (1.6,1.75) (1.75,1.95) (2.1,2)}--(2.1,1.7) -- (2.1,0) -- (0.6,0) -- (0.6,2);
	\draw [ultra thick] (2.1,2.5) -- (2.1,0) -- (0.6,0) -- (0.6,2.5);
	\draw [white, fill=yellow!80,draw opacity=0](1.1,2)  -- (1.6,2) -- (1.3,1.55) -- (1.8,1.55) -- (1.05,0.6) -- (1.35,1.35) -- (0.9,1.35) -- (1.15,2);
	\draw [ultra thick,]  (1.1,2.65) rectangle (1.6,2.5);
	
	%
	\draw  [dashed,red,ultra thick,draw opacity = 100, fill=pink!50,rounded corners,ultra thick]  (4.3,3) rectangle (6.8,-0.5);
	
	\draw [fill=white,ultra thick] (4.75,2.53)  -- (4.75,0) node (v2) {} -- (6.25,0) node (v8) {} -- (6.25,2.5) -- (4.75,2.5);
	\draw [fill = white,ultra thick]  (5.25,2.65) rectangle (5.75,2.5);
	%
	\draw [fill=black!30!green,draw opacity=0] plot[smooth, tension=.7] coordinates {(4.75,2) (4.9,1.95) (5.05,1.8) (5.15,1.95) (5.3,1.8) (5.5,1.85) (5.75,1.75) (5.9,1.95) (6.25,2)}--(6.25,1.7) -- (6.25,0) -- (4.75,0) -- (4.75,2);
	\draw [ultra thick] (6.25,2.5) -- (6.25,0) -- (4.75,0) -- (4.75,2.5);
	\draw [white, fill=yellow!80,draw opacity=0](5.25,2)  -- (5.75,2) -- (5.45,1.55) -- (5.95,1.55) -- (5.2,0.6) -- (5.5,1.35) -- (5.05,1.35) -- (5.3,2);
	\draw [ultra thick,]  (5.25,2.65) rectangle (5.75,2.5);

	\draw  [green,ultra thick,draw opacity = 100, fill=green!30,rounded corners,ultra thick] (6.95,3) rectangle (9.45,-0.5);
	
	\node [circle,fill=blue,scale=0.6] at (8.2,1.5) {};
	\draw [blue,ultra thick](7.7,2.366) arc (120.0007:240:1);
	\draw [blue,ultra thick](8.7,2.366) arc (59.9993:-60:1);
	
	\draw  [blue,ultra thick] (7.95,1.933) arc (120.0007:240:0.5);
	\draw  [blue,ultra thick] (8.45,1.933) arc (59.9993:-60:0.5);
	
	\draw  [blue,ultra thick] (8.2,1.5) -- (8.2,0);
	
	\draw  [dashed,red,ultra thick,draw opacity = 100, fill=pink!50,rounded corners,ultra thick]  (9.6,3) rectangle (12.1,-0.5);
	
	\draw [fill=white,ultra thick] (10.05,2.53)  -- (10.05,0) node (v2) {} -- (11.55,0) node (v8) {} -- (11.55,2.5) -- (10.05,2.5);
	\draw [fill = white,ultra thick]  (10.55,2.65) rectangle (11.05,2.5);
	%
	\draw [fill=black!30!green,draw opacity=0] plot[smooth, tension=.7] coordinates {(10.05,2) (10.2,1.95) (10.35,1.8) (10.45,1.95) (10.6,1.8) (10.8,1.85) (11.05,1.75) (11.2,1.95) (11.55,2)}--(11.55,1.7) -- (11.55,0) -- (10.05,0) -- (10.05,2);
	\draw [ultra thick] (11.55,2.5) -- (11.55,0) -- (10.05,0) -- (10.05,2.5);
	\draw [white, fill=yellow!80,draw opacity=0](10.55,2)  -- (11.05,2) -- (10.75,1.55) -- (11.25,1.55) -- (10.5,0.6) -- (10.8,1.35) -- (10.35,1.35) -- (10.6,2);
	\draw [ultra thick,]  (10.55,2.65) rectangle (11.05,2.5);
	
	\draw  [dashed,red,ultra thick,draw opacity = 100, fill=pink!50,rounded corners,ultra thick]  (13.6,3) rectangle (16.1,-0.5);
	
	\draw [fill=white,ultra thick] (14.05,2.53)  -- (14.05,0) node (v2) {} -- (15.55,0) node (v8) {} -- (15.55,2.5) -- (14.05,2.5);
	\draw [fill = white,ultra thick]  (14.55,2.65) rectangle (15.05,2.5);
	%
	\draw [fill=black!30!green,draw opacity=0] plot[smooth, tension=.7] coordinates {(14.05,2) (14.2,1.95) (14.35,1.8) (14.45,1.95) (14.6,1.8) (14.8,1.85) (15.05,1.75) (15.2,1.95) (15.55,2)}--(15.55,1.7) -- (15.55,0) -- (14.05,0) -- (14.05,2);
	\draw [ultra thick] (15.55,2.5) -- (15.55,0) -- (14.05,0) -- (14.05,2.5);
	\draw [white, fill=yellow!80,draw opacity=0](14.55,2)  -- (15.05,2) -- (14.75,1.55) -- (15.25,1.55) -- (14.5,0.6) -- (14.8,1.35) -- (14.35,1.35) -- (14.6,2);
	\draw [ultra thick,]  (14.55,2.65) rectangle (15.05,2.5);
	
	\draw  [green,ultra thick,draw opacity = 100, fill=green!30,rounded corners,ultra thick] (16.25,3) rectangle (18.75,-0.5);
	
	\node [circle,fill=blue,scale=0.6] at (17.5,1.5) {};
	\draw [blue,ultra thick](17,2.366) arc (120.0007:240:1);
	\draw [blue,ultra thick](18,2.366) arc (59.9993:-60:1);
	
	\draw  [blue,ultra thick] (17.25,1.933) arc (120.0007:240:0.5);
	\draw  [blue,ultra thick] (17.75,1.933) arc (59.9993:-60:0.5);
	
	\draw  [blue,ultra thick] (17.5,1.5) -- (17.5,0);
	
	\draw  [green,ultra thick,draw opacity = 100, fill=green!30,rounded corners,ultra thick] (18.9,3) rectangle (21.4,-0.5);
	
	\node [circle,fill=blue,scale=0.6] at (20.15,1.5) {};
	\draw [blue,ultra thick](19.65,2.366) arc (120.0007:240:1);
	\draw [blue,ultra thick](20.65,2.366) arc (59.9993:-60:1);
	
	\draw  [blue,ultra thick] (19.9,1.933) arc (120.0007:240:0.5);
	\draw  [blue,ultra thick] (20.4,1.933) arc (59.9993:-60:0.5);
	
	\draw  [blue,ultra thick] (20.15,1.5) -- (20.15,0);
	
	\node [font=\huge] at (-19.25,1) {...};
	\node [font=\huge]at (-12.5,1) {...};
	\node [font=\huge]at (-3.25,1) {...};
	\node [font=\huge]at (3.5,1) {...};
	\node [font=\huge] at (12.9,1) {...};
	\node [font=\huge] at (22.15,1) {...};

	\draw [thick, decoration={brace,  mirror, raise=5,amplitude=10},decorate] (-15.9,-0.5) -- (-6.6,-0.5) node [midway,yshift=-25,font=\small] {$k\textrm{th}$ PT-IT Cycle};
	\draw [thick, decoration={brace,  mirror, raise=5,amplitude=10}, decorate] (-6.4,-0.5) -- (9.45,-0.5) node [midway,yshift=-25,font=\small] {$\left(k+1\right)\textrm{th}$ PT-IT Cycle};
	\draw [thick, decoration={brace,  mirror, raise=5,amplitude=10}, decorate] (9.6,-0.5) -- (18.75,-0.5) node [xshift=-60,yshift=-25,font=\small] {$\left(k+2\right)\textrm{th}$ PT-IT Cycle};
	\draw [thick, decoration={brace,  mirror, raise=5,amplitude=10}, decorate] (18.75,-0.5) -- (21.6,-0.5) node [yshift=-25,midway,font=\small] {$\left(k+3\right)\textrm{th}$};
	\node [font=\small]at (20.3,-4.2) {PT-IT Cycle};

	\draw [thick, decoration={brace,   raise=5,amplitude=10}, decorate] (-15.9,3) -- (-9.3,3) node [black,midway,yshift=25,font=\small] {$X$ Blocks};
	\draw [thick, decoration={brace,   raise=5,amplitude=10}, decorate] (-6.5,3) -- (0.05,3) node [black,midway,yshift=25,font=\small] {$X$ Blocks};
	\draw [thick, decoration={brace,   raise=5,amplitude=10}, decorate] (0.2,3) -- (6.8,3) node [black,midway,yshift=25,font=\small] {$Y$ Blocks};
	\draw [thick, decoration={brace,   raise=5,amplitude=10}, decorate] (9.6,3) -- (16.1,3) node [black,midway,yshift=25,font=\small] {$Y$ Blocks};
	\draw  [green,ultra thick,draw opacity = 100, fill=green!30,rounded corners,ultra thick] (-18.5,-4) rectangle (-16,-7.5);
	
	\node [circle,fill=blue,scale=0.6] at (-17.25,-5.5) {};
	\draw [blue,ultra thick](-17.75,-4.634) arc (120.0007:240:1);
	\draw [blue,ultra thick](-16.75,-4.634) arc (59.9993:-60:1);
	
	\draw  [blue,ultra thick] (-17.5,-5.067) arc (120.0007:240:0.5);
	\draw  [blue,ultra thick] (-17,-5.067) arc (59.9993:-60:0.5);
	
	\draw  [blue,ultra thick] (-17.25,-5.5) -- (-17.25,-7);
	
	%
	%
	%
	

	\draw  [red,ultra thick,draw opacity = 100, fill=pink!50,rounded corners,ultra thick]  (-2.45,-4.1) rectangle (0.05,-7.6);
	
	\draw [fill=white,ultra thick] (-2,-4.57)  -- (-2,-7.1) node (v2) {} -- (-0.5,-7.1) node (v8) {} -- (-0.5,-4.6) -- (-2,-4.6);
	\draw [fill = white,ultra thick]  (-1.5,-4.45) rectangle (-1,-4.6);
	%
	\draw [fill=black!30!green,draw opacity=0] plot[smooth, tension=.7] coordinates {(-2,-5.1) (-1.85,-5.15) (-1.7,-5.3) (-1.6,-5.15) (-1.45,-5.3) (-1.25,-5.25) (-1,-5.35) (-0.85,-5.15) (-0.5,-5.1)}--(-0.5,-5.4) -- (-0.5,-7.1) -- (-2,-7.1) -- (-2,-5.1);
	\draw [ultra thick] (-0.5,-4.6) -- (-0.5,-7.1) -- (-2,-7.1) -- (-2,-4.6);
	\draw [white, fill=yellow!80,draw opacity=0](-1.5,-5.1)  -- (-1,-5.1) -- (-1.3,-5.55) -- (-0.8,-5.55) -- (-1.55,-6.5) -- (-1.25,-5.75) -- (-1.7,-5.75) -- (-1.45,-5.1);
	\draw [ultra thick,]  (-1.5,-4.45) rectangle (-1,-4.6);

	\draw  [dashed,red,ultra thick,draw opacity = 100, fill=pink!50,rounded corners,ultra thick]  (9.55,-4.1) rectangle (12.05,-7.6);
	
	\draw [fill=white,ultra thick] (10,-4.57)  -- (10,-7.1) node (v2) {} -- (11.5,-7.1) node (v8) {} -- (11.5,-4.6) -- (10,-4.6);
	\draw [fill = white,ultra thick]  (10.5,-4.45) rectangle (11,-4.6);
	%
	\draw [fill=black!30!green,draw opacity=0] plot[smooth, tension=.7] coordinates {(10,-5.1) (10.15,-5.15) (10.3,-5.3) (10.4,-5.15) (10.55,-5.3) (10.75,-5.25) (11,-5.35) (11.15,-5.15) (11.5,-5.1)}--(11.5,-5.4) -- (11.5,-7.1) -- (10,-7.1) -- (10,-5.1);
	\draw [ultra thick] (11.5,-4.6) -- (11.5,-7.1) -- (10,-7.1) -- (10,-4.6);
	\draw [white, fill=yellow!80,draw opacity=0](10.5,-5.1)  -- (11,-5.1) -- (10.7,-5.55) -- (11.2,-5.55) -- (10.45,-6.5) -- (10.75,-5.75) -- (10.3,-5.75) -- (10.55,-5.1);
	\draw [ultra thick,]  (10.5,-4.45) rectangle (11,-4.6);

	\node [right,font=\small] at (-15.5,-5.7) {IT Block};
	\node [right,font=\small] at (0.25,-5.7) {Dedicated PT Block};
	\node [right,font=\small] at (12.15,-5.7) {Opportunistic PT Block};
	
	\end{tikzpicture}

	\vspace*{-0.2cm}
	\caption{Illustration of four types of PT-IT cycles.}
	\label{fig:Xyblock}
	\vspace*{-0.4cm}
\end{figure*}

\vspace*{-0.2cm}
\subsection{Long-Term Behavior} 
We are interested in the long-term behavior
%
 (rather than that during the transition stage) of the communication process determined by our proposed protocol. 
After a sufficiently long time, the behavior of the communication process falls in one of the following two cases:
%
%
%
\begin{itemize}
	\item \textit{Energy Accumulation}: 
	In this case, on average, the energy harvested at the jammer during opportunistic PT blocks is higher than the energy required during an IT block. 
	Thus, after a long time has passed, 
	the energy steadily accumulates at the jammer and 
	there is no need for dedicated PT blocks (the harvested energy by opportunistic PT blocks fully meet the energy consumption requirement at the jammer). 
	Consequently, only PT-IT cycles with $X=0$ can occur.			
	\item \textit{Energy Balanced}:
	In this case, on average, the energy harvested at the jammer during opportunistic PT blocks is lower than the energy required during an IT block. Thus, after a long time has passed, dedicated PT blocks are sometimes required to make sure that the energy harvested from both dedicated and opportunistic PT blocks equals the energy required for jamming in IT blocks on average. 
	Consequently, all four types of PT-IT cycles can occur.		
\end{itemize}
\textit{Remarks:}
		Although we have assumed infinite battery capacity for simplicity in the analysis, it is important to discuss the effect on finite battery capacity.
		In fact, our analytical result is valid for finite battery capacity as long as the battery capacity in much higher than the required jamming energy 
		$\mathcal{P}_J T$.\footnote{{{From \cite{EH_Survey},}}
		{for typical energy storage, including super-capacitor or chemical battery, the capacity easily reaches several Joules, or even several thousand Joules. While in our work, from the simulation results to be presented later, the optimal value of required jamming energy is only several micro Joules. 
			Therefore, it is reasonable to say that the battery capacity in practice is much larger than the required jamming energy.}} To be specific:
\par	{	 			 
		i) When the communication process is in the energy accumulation case, the harvested energy steadily accumulates at the jammer, thus, the energy level will always reach the maximum battery capacity after a sufficient long time and stay near the maximum capacity for the remaining time period.
		This means that the energy level in the battery is always much larger than the required jamming energy level. Thus, having a finite battery capacity has hardly any effect on the communication process, as compared with infinite capacity.
\par
		ii) When the communication process is in the energy balanced case,
		on average, the harvested energy equals the required (consumed) jamming energy.
		Therefore, the energy level mostly stays between zero and the required jamming energy level, $\mathcal{P}_J T$. This also means that the energy level in the battery can hardly approach the maximum battery capacity. Thus, having a finite battery capacity has almost no effect on the communication process, compared with infinite capacity.
\\		
{\hspace*{10pt}}Therefore, although our analysis is based on the assumption of infinite battery capacity, the analytical results still hold with practical finite battery capacity.
		}

\par
In the next section, the mathematical model for the proposed protocol is presented. 
The boundary condition between the energy accumulation and energy balanced cases is derived. 
In Section VI, we will also verify the long-term behavior through simulations. 
\vspace{-0.2cm}
\section{Protocol Analysis}
In this section, 
we analyze the proposed secure communication protocol and derive the achievable throughput for the proposed secure communication protocol. 
\vspace{-0.2cm}
\subsection{Signal Model}
{In a PT block, the source sends radio signal $x_{SJ}$ with power $\mathcal{P}_s$.
	Thus, received signal at the jammer, $\bm y_J$ is given by}
\vspace*{-6pt}
\begin{equation} \label{y_J}
{\bm y_J=\frac{1}{\sqrt{d^m_{SJ}}} \sqrt{\mathcal{P}_s} \bm h_{SJ} x_{SJ} + \bm n_J,} 
\end{equation}
where $x_{SJ}$ is the normalized signal from the source in an PT block, and $\bm n_J$ is the additive white Gaussian noise (AWGN) at the jammer. 
	{From equation (4), by ignoring the noise power,}
	{the harvested energy is given by [22]}
	\begin{equation*} 
	{\rho_J( \bm h_{SJ}) = \eta \left\vert \frac{1}{\sqrt{d^m_{SJ}}} \sqrt{\mathcal{P}_s} \bm h_{SJ} \right\vert^2 T,}
	\end{equation*}
	{where $\eta$ is the energy conversion efficiency of RF-DC conversion operation for energy storage at the jammer.
		Because the elements of $\bm h_{SJ}$ are independent identically distributed complex Gaussian random variable with normalized variance,
		we have}
	${\myexpect{\vert \bm h_{SJ} \vert^2} = N_J}$.
	{Therefore, the average harvested energy $\rho_J$ is given by}
	\begin{equation} \label{my_rho}
	{\rho_J = \myexpect{\rho_J( \bm h_{SJ}) } =
		\myexpect{\eta \frac{1}{{d^m_{SJ}}} {\mathcal{P}_s} \left\vert \bm h_{SJ} \right\vert^2 T}
		= \frac{\eta N_J \mathcal{P}_s T}{d^m_{SJ}}.} 
	\end{equation}	
\par
During an IT block,
the source transmits information-carrying signal with the protection from the jammer.
The jammer applies different signaling methods depending on its number of antennas.  
When $N_J = 1$, the jammer sends a noise-like signal $x_{JD}$ with power $\mathcal{P}_J$, affecting both the destination and the eavesdropper.
When $N_J > 1$,
by using the artificial interference generation method in \cite{zhou_trans}, the jammer generates an $N_J \times \left(N_J-1\right)$ matrix $\bm W$ which is an orthonormal basis of the null space of $\bm h_{JD}$, and also an vector $\bm v$ with $N_J -1 $ independent identically distributed complex Gaussian random elements with normalized variance.\footnote{With the assumption of zero additive noise at the eavesdropper, the null-space artificial jamming scheme works when the number of jamming antennas in larger than the number of eavesdropper antennas, as discussed in $[36]$. This condition is satisfied in this work when $N_J >1$.} Then the jammer sends $\bm {Wv}$ as jamming signal.
Thus, the received signal at the destination, $y_D$, is given by
\begin{equation} \label{yd}
y_D\!=\!
\left\lbrace
\begin{aligned}
&\frac{\sqrt{\mathcal{P}_s} }{\sqrt{d^m_{SD}}}h_{SD} x_{SD}
+ \frac{\sqrt{\mathcal{P}_J}}{\sqrt{d^m_{JD}}}  h_{JD} x_{JD}
+n_d, &N_J = 1, \\
&\frac{\sqrt{\mathcal{P}_s}}{\sqrt{d^m_{SD}}} h_{SD} x_{SD} + n_d,&N_J>1,\\
\end{aligned}
\right.
\end{equation}
where $x_{SD}$ is the normalized information signal from the source in an IT block
and $n_d$ is the AWGN at the destination with variance $\sigma_d^2$.
Note that for $N_J>1$, the received signal is free of jamming, because the jamming signal is transmitted into the null space of $\bm h_{JD}$.

Similarly, the received signal at the eavesdropper, $y_E$, is given by
\begin{equation} \label{ye}
y_E\!=\!\!
\left\lbrace \!\!
\begin{aligned}
&\frac{\sqrt{\mathcal{P}_s}}{\!\!\sqrt{d^m_{SE}}}  h_{SE} x_{SD}
\!+ \!\frac{\sqrt{\mathcal{P}_J}}{\!\!\!\!\sqrt{d^m_{JE}}}  h_{JE}  x_{JD}
\!+\!n_e,& \!\!\!N_J\!=\!1,\\
&\frac{\sqrt{\mathcal{P}_s}}{\sqrt{d^m_{SE}}} h_{SE} x_{SD}
+ \frac{\sqrt{\mathcal{P}_J}}{\sqrt{d^m_{JE}}}  \bm h_{JE} \frac{\bm W \bm v}{\sqrt{N_J-1}}
\!+\!n_e,& \!\!\!N_J\!>\!1,
\end{aligned}
\right.
\end{equation}
where $n_e$ is the AWGN at the eavesdropper which we have assumed to be $0$ as a worst-case scenario.
\par
%
%
From \eqref{yd},
the SINR at the destination is
\begin{equation} \label{SNR_D}
\gamma_d=\left\lbrace
\begin{aligned}
&\frac{\frac{\mathcal{P}_s}{d^m_{SD}} \vert h_{SD}\vert ^2}{ \sigma_d^2 + \frac{\mathcal{P}_J}{d^m_{JD}}\vert h_{JD}\vert ^2},&N_J=1\\
&\frac{\mathcal{P}_s \vert h_{SD}\vert ^2}{d^m_{SD} \sigma_d^2},&N_J>1
\end{aligned}
\right.
\end{equation}
Hence the capacity of $S \rightarrow D$ link is given as $C_d = \log_2 \left(1+\gamma_d\right)$. 

Since $\lvert h_{SD} \rvert$ and $\lvert h_{JD} \rvert$ are Rayleigh distributed, $\lvert h_{SD} \rvert^2$ and $\lvert h_{JD} \rvert^2$ are exponential distributed and $\gamma_d$ has the cumulative distribution function (cdf) as
\begin{equation}\label{eqgammad}
F_{\gamma_d}\left(x\right)=\left\lbrace
\begin{aligned}
&1-\frac{e^{-\frac{x}{\rho_d}}}{1+\varphi x},& N_J=1,\\
&1-e^{-\frac{x}{\rho_d}}, & N_J >1,
\end{aligned}
\right. 
\end{equation}
where
\begin{equation} \label{varphi}
\varphi = \frac{\mathcal{P}_J}{\mathcal{P}_s}\frac{d^m_{SD}}{d^m_{JD}}.
\end{equation}
For convenience, we define the SNR at the destination (without jamming noise) as
\begin{equation} \label{m_rho_d}
\rho_d \triangleq \frac{\mathcal{P}_s}{d^m_{SD} \sigma^2_d}.
\end{equation}

From \eqref{ye}, the SINR at the eavesdropper is
\begin{equation} \label{SNR_E}
\gamma_e=\left\lbrace
\begin{aligned}
&\frac{1}{\phi}\frac{\vert h_{SE}\vert ^2}{\vert h_{JE} \vert^2}, &N_J=1,\\
&\frac{1}{\phi}\frac{\vert h_{SE}\vert ^2}{\frac{\Vert \bm h_{JE} \mathbf{W}\Vert^2}{N_J-1}}, &N_J>1,
\end{aligned}
\right.
\end{equation}
where
\begin{equation} \label{phi}
\phi=\frac{\mathcal{P}_J}{\mathcal{P}_s} \frac{d^m_{SE}}{d^m_{JE}}.
\end{equation}
Hence, the capacity of $S \rightarrow E$ link is given as $C_e = \log_2 \left(1+\gamma_e\right)$. 
Using the fact that $h_{SE}$, $h_{JE}$ and the entries of $\bm h_{JE} \mathbf{W}$ 
are independent and identically distributed (i.i.d.) complex Gaussian variables, 
from \cite{XiZhang}, $\gamma_e$ has the probability density function (pdf) as
\begin{equation}\label{eqgammae}
f_{\gamma _e}\left(x\right)=
\left\lbrace
\begin{aligned}
&\phi \left(\frac{1}{\phi x + 1}\right)^{2},&N_J=1,\\
&\phi \left(\frac{N_J-1}{\phi x + N_J -1}\right)^{N_J}, &N_J>1.
\end{aligned}
\right.
\end{equation}
Using the pdf of $\gamma_e$ in \eqref{eqgammae}, the secrecy outage probability defined in \eqref{p_so_first} can be evaluated.
\vspace*{-6pt}
\subsection{Information Transmission Probability}
Focusing on the long-term behavior, we analyze the proposed secure communication protocol and derive the information transmission probability $p_{tx}$, which in turn gives the throughput in \eqref{defi_B}. 
Note that $p_{tx}$ is the probability of an arbitrary block being used for IT. 
%
As discussed in the last section, the communication process falls in either energy accumulation or energy balanced case. 
Thus, $p_{tx}$ will have different values for the two different cases. 
First we mathematically characterize the condition of being in either case in the lemma below. 
\vspace*{-0.1cm}
\begin{lemma} \label{L1}
	\textnormal{The communication process with the proposed secure communication protocol leads to energy accumulation if 
	\vspace*{-5pt}
	\begin{equation}
	\frac{p_{co}}{1-p_{co}}  > \frac{\mathcal{P}_J T}{\rho_J}
	\end{equation}
	is satisfied. 
	Otherwise, the communication process is energy balanced.}
\end{lemma}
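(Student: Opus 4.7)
The plan is to compare, on a per‑PT‑IT‑cycle basis, the average energy harvested by the jammer against the average energy consumed for jamming, under the hypothetical assumption that no dedicated PT blocks ever occur. Since the channel fading is i.i.d.\ across blocks and independent of the jammer's battery state, the event ``$S\rightarrow D$ is in connection outage'' occurs independently in each block with probability $p_{co}$, regardless of what is happening at the jammer. Therefore, conditional on the jammer already having sufficient energy $\mathcal{P}_J T$ at the start of a block, the block is an opportunistic PT block with probability $p_{co}$ and an IT block with probability $1-p_{co}$.

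First I would analyze the energy accumulation case. If the system never needs dedicated PT blocks in the long run, then each PT-IT cycle consists of $Y$ opportunistic PT blocks followed by a single IT block, with $Y$ geometrically distributed: $\mathbb{P}\{Y=y\} = p_{co}^y(1-p_{co})$ for $y=0,1,2,\ldots$. Hence
\begin{equation*}
\mathbb{E}[Y] = \frac{p_{co}}{1-p_{co}}.
\end{equation*}
The expected energy harvested per cycle equals $\mathbb{E}[Y]\,\rho_J$, using \eqref{my_rho}, while the energy consumed per cycle is $\mathcal{P}_J T$. For this ``no‑dedicated‑PT'' regime to be self‑consistent in the long run, the average harvested energy must at least match consumption, i.e.
\begin{equation*}
\frac{p_{co}}{1-p_{co}}\,\rho_J \;>\; \mathcal{P}_J T,
\end{equation*}
which rearranges to the stated condition. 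Conversely, if this strict inequality holds, then by a simple law‑of‑large‑numbers argument the battery level drifts to $+\infty$ (the cumulative harvested energy dominates the cumulative consumption), so after sufficiently many blocks the jammer always has $\mathcal{P}_J T$ available and dedicated PT blocks cease to occur—confirming the energy accumulation case.

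Next I would treat the converse. Suppose instead $\tfrac{p_{co}}{1-p_{co}} \le \tfrac{\mathcal{P}_J T}{\rho_J}$. If we hypothetically tried to operate with only opportunistic PT blocks, the expected per‑cycle net energy change $\mathbb{E}[Y]\rho_J - \mathcal{P}_J T$ would be non‑positive, so the battery would repeatedly be driven toward zero. Whenever the battery falls below $\mathcal{P}_J T$, by definition of the protocol any subsequent block in which the S→D link is in outage remains a PT block, but blocks with a good S→D link also become dedicated PT blocks (since the first condition fails). Hence $X>0$ occurs with positive long‑run frequency, and on average the harvested energy from dedicated plus opportunistic PT blocks exactly balances $\mathcal{P}_J T$ per IT block—this is precisely the energy balanced case.

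The main obstacle I anticipate is making the ``long‑run'' claims precise: specifically, showing that the i.i.d.\ structure of the fading implies that conditional on battery sufficiency, the indicator of connection outage is Bernoulli$(p_{co})$ independent of the past, and then invoking a renewal/LLN argument to turn per‑cycle expectations into almost‑sure statements about which regime the system occupies. Once that independence is established, the comparison $\mathbb{E}[Y]\rho_J \gtrless \mathcal{P}_J T$ yields the dichotomy immediately.
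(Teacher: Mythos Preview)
Your proposal is correct and follows essentially the same approach as the paper: compute $\mathbb{E}[Y]=\frac{p_{co}}{1-p_{co}}$ from the geometric distribution of opportunistic PT blocks, then compare the per-cycle expected harvested energy $\mathbb{E}[Y]\rho_J$ against the per-cycle consumption $\mathcal{P}_J T$. In fact your write-up is more careful than the paper's own proof, which states the comparison and the dichotomy without the LLN/renewal justification or the explicit converse argument that you supply.
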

\begin{proof}
	See Appendix A.
\end{proof}
Using Lemma \ref{L1}, we can find the general expression for $p_{tx}$ as presented in Theorem \ref{P1} below. 
\begin{theorem} \label{P1}
	\textnormal{The information transmission probability for the proposed secure communication protocol is given by
	\vspace*{-5pt}
	\begin{equation} \label{p_tx_full}
	p_{tx} = 
	\frac{1}
	{1+\max \left\lbrace \frac{\mathcal{P}_J T}{ \rho_J},
		{\frac{p_{co}}{1-p_{co}}}
		\right\rbrace},	
	\end{equation}
	where 
	\begin{equation} \label{p_new_back}
	p_{co}=\left\lbrace
	\begin{aligned}
	&1-\frac{e^{-\frac{2^{R_t}-1}{\rho_d}}}{1+\frac{\mathcal{P}_J}{\mathcal{P}_s}\frac{d^m_{SD}}{d^m_{JD}}\left(2^{R_t}-1\right)}, &N_J=1,\\
	&1-e^{-\frac{2^{R_t}-1}{\rho_d}}, &N_J>1.
	\end{aligned}
	\right.
	\end{equation}}
\end{theorem}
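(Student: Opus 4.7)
The plan is to split the analysis by Lemma~\ref{L1} into its two regimes, derive $p_{tx}$ separately in each, then combine them into the single expression~\eqref{p_tx_full}; the formula for $p_{co}$ will follow immediately from the CDF of $\gamma_d$ in~\eqref{eqgammad}.

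First I would handle the energy accumulation regime, in which, by Lemma~\ref{L1}, $p_{co}/(1-p_{co}) > \mathcal{P}_J T/\rho_J$ and the jammer's battery eventually stays sufficiently charged that the energy condition in the IT criterion is satisfied at every block in the long run. A block is then IT iff the $S\to D$ link avoids connection outage, so $p_{tx} = 1 - p_{co}$. Algebraically this equals $1/(1 + p_{co}/(1-p_{co}))$, and by Lemma~\ref{L1} the max in~\eqref{p_tx_full} selects $p_{co}/(1-p_{co})$ precisely in this regime, matching the claim.

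Next I would tackle the energy balanced regime by a long-run energy-balance argument. Over $N$ consecutive blocks with $N$ large, let $N_{IT}$ and $N_{PT}$ denote the numbers of IT and PT blocks. Each PT block harvests energy with mean $\rho_J$ (from~\eqref{my_rho}) and each IT block consumes exactly $\mathcal{P}_J T$. By the strong law of large numbers applied across i.i.d.\ blocks, the total harvested energy is asymptotically $N_{PT}\rho_J$ while the total consumed is $N_{IT}\mathcal{P}_J T$; in the energy balanced regime these must coincide in the limit, giving $N_{PT}/N_{IT} \to \mathcal{P}_J T/\rho_J$ and hence $p_{tx} = \lim N_{IT}/(N_{IT}+N_{PT}) = 1/(1 + \mathcal{P}_J T/\rho_J)$. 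By Lemma~\ref{L1}, here $\mathcal{P}_J T/\rho_J \ge p_{co}/(1-p_{co})$, so the max in~\eqref{p_tx_full} selects $\mathcal{P}_J T/\rho_J$ and the claim again holds (and both formulas agree on the boundary).

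Finally, $p_{co}$ follows by direct evaluation: $p_{co} = \mathbb{P}\{R_t > C_d\} = \mathbb{P}\{\gamma_d < 2^{R_t}-1\} = F_{\gamma_d}(2^{R_t}-1)$, and substituting~\eqref{eqgammad} together with $\varphi$ from~\eqref{varphi} yields~\eqref{p_new_back} in both antenna cases. The main obstacle I anticipate is the rigorous justification of the long-run energy balance in the second regime: because the number of PT blocks per cycle depends on the harvested energies in those blocks, a Wald-type identity cannot be invoked directly on the cycle quantities $X$ and $Y$. The cleanest route is the block-indexed SLLN argument above, which bypasses the cycle structure entirely; an alternative would be to model the battery level as a positive-recurrent Markov chain and read off the balance condition from its stationary distribution.
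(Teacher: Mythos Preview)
Your proposal is correct and follows essentially the same approach as the paper: split by Lemma~\ref{L1}, obtain $p_{tx}=1-p_{co}$ in the energy accumulation regime, use conservation of energy plus the law of large numbers to get $p_{tx}=1/(1+\mathcal{P}_J T/\rho_J)$ in the energy balanced regime, then combine and read off $p_{co}$ from the CDF~\eqref{eqgammad}. The only minor difference is that the paper first invokes a Harris chain ergodicity argument to justify that the stationary probability $p_{tx}$ exists before applying the time-averaging energy-balance step, whereas you treat the Markov/recurrence route as an alternative rather than a prerequisite.
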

\begin{proof}
	We first model the communication process in both energy accumulation and energy balanced cases as Markov chains and show the ergodicity of the process. 
	This then allows us to derive the stationary probability of a block being used for IT either directly or by using time averaging. 
	The detailed proof can be found in Appendix {B}.
\end{proof}
\par
By substituting \eqref{p_tx_full} in \eqref{defi_B}, we obtain the achievable throughput of the proposed protocol. 
%
\section{Optimal Design for Throughput}
In the last section, we derived the achievable throughput with given design parameters.
Specifically the jamming power $\mathcal{P}_J$ is a design parameter of the protocol.
A different value of $\mathcal{P}_J$ results in a different impact on the eavesdropper's SINR,
hence leads to different secrecy outage probability defined in \eqref{p_so_first}. 
Also the rate parameters of the wiretap code, $R_t$ and $R_s$, affect the secrecy outage probability. 
Hence, it is interesting to see how one can optimally design these parameters to maximize the throughput while keeping the secrecy outage probability below a prescribed threshold.
In this section, we present such an optimal fixed-rate design of the proposed secure communication protocol. 
The optimization is done offline, hence does not increase the complexity of the proposed protocol.
\vspace*{-10pt}
\subsection{Problem Formulation}
We consider the optimal secure communication design as follows:
\begin{equation} \label{my_first_opti}
\begin{aligned}
&\max_{\mathcal{P}_J, R_t,R_s} \ \ \ \pi \\
&\text{s.t.} \ \ \  p_{so} \leq \varepsilon,\ 
 \mathcal{P}_J \geq 0,\ 
 R_t \geq R_s \geq 0,
\end{aligned}
\end{equation}
where $\varepsilon$ is the secrecy outage probability constraint. This design aims to maximize the throughput with the constraint on the secrecy outage probability.
\par
From \eqref{p_so_first}, the secrecy outage probability should meet the requirement that
\begin{equation} \label{eqpso}
p_{so} =\mathbb{P}\left\lbrace R_t-R_s < \log_2\left(1+\gamma_e\right)  \right \rbrace \leq \varepsilon.
\end{equation}
By substituting \eqref{eqgammae} into \eqref{eqpso}, 
and after some simplification manipulations,
the jamming power $\mathcal{P}_J$ should satisfy the condition
\begin{equation} \label{secrecy_constraint}
\mathcal{P}_J
\geq
\mathcal{\hat{P}}_J
\triangleq
\left\lbrace
\begin{aligned}
&\mathcal{P}_s \frac{d^m_{JE}}{d^m_{SE}}\frac{\left(\varepsilon^{-1}-1\right)}{2^{R_t-R_s}-1},&N_J=1,\\
&\mathcal{P}_s \frac{d^m_{JE}}{d^m_{SE}}\frac{\left(N_J-1\right)\left(\varepsilon^{-\frac{1}{N_J-1}}-1\right)}{2^{R_t-R_s}-1},&N_J>1.
\end{aligned}
\right.
\end{equation}
From \eqref{p_tx_full}, we can see that $\pi$ decreases with $\mathcal{P}_J$. 
Thus, the maximum $\pi$ is obtained when
\begin{equation} \label{P_J}
\mathcal{P}^{\star}_J = \mathcal{\hat{P}}_J.
\end{equation}
\par
The jammer harvests energy from the source in each PT block. 
The dynamically harvested and accumulated energy at the jammer must exceed $\mathcal{P}^{\star}_J T$, before it can be used to send jamming signal with power $\mathcal{P}^{\star}_J$.

\par
Substituting \eqref{P_J} and \eqref{p_tx_full}, into \eqref{defi_B}, 
the throughput with optimal jamming power $\mathcal{P}^{\star}_J$ satisfying the secrecy outage constraint of $p_{so} \leq \varepsilon$, is given by \eqref{NAE_N=1} shown at the top of next page.
\begin{figure*}[!t]
	\normalsize
		\begin{equation} \label{NAE_N=1}
		\pi\left(\mathcal{P}^{\star}_J\right)
		=\left\lbrace
		\begin{aligned}
		&\frac{R_s}
		{1\!+\!
			\max\left\lbrace
			\underbrace{\frac{d^m_{SJ}}{\eta} \frac{d^m_{JE}}{d^m_{SE}}\frac{\left(\varepsilon^{-1}\!-1\!\right)}{2^{R_t-R_s}\!-\!1}}
			_{\left(a\right)},
			\underbrace{e^{\frac{\left(2^{R_t}-1\right)}{\rho_d}}\left(1\!+\!
				\frac{d^m_{JE}}{d^m_{SE}} \frac{d^m_{SD}}{d^m_{JD}} \frac{\left(\varepsilon ^{-1}-1\right)}{2^{R_t-R_s}-1}
				\left(2^{R_t}\!-\!1\right)\right)\!-1\!}_{\left(b\right)}
			\right\rbrace},\quad &N_J =1,\\
		&\frac{R_s}
		{1+\max\left\lbrace
			\underbrace{\frac{d^m_{SJ}}{N_J \eta} \frac{d^m_{JE}}{d^m_{SE}}\frac{\left(N_J-1\right)\left(\varepsilon^{-\frac{1}{N_J-1}}-1\right)}{2^{R_t-R_s}-1}}_{\left(a\right)},
			\underbrace{e^{\frac{2^{R_t}-1}{\rho_d}}-1}_{\left(b\right)}
			\right\rbrace}, \ \ \ \ \ \ \ \ \ \ \ \ \ \ \ \ &N_J > 1.
		\end{aligned}
		\right.\\
		\end{equation}
	\hrulefill
	\vspace*{-0.4cm}
\end{figure*}

Note that the terms $\left(a\right)$ and $\left(b\right)$ in \eqref{NAE_N=1} are the terms $\frac{\mathcal{P}_J T}{\rho_J}$ and $\frac{p_{co}}{1-p_{co}}$ in Lemma \ref{L1}, respectively.
Thus, if we choose $\left(R_t,R_s\right)$ to make $\left(a\right) > \left(b\right)$, the communication process is energy balanced; while if $\left(R_t,R_s\right)$ make $\left(a\right) \leq \left(b\right)$, the communication process leads to energy accumulation.
For analytical convenience, we define three 2-dimension rate regions:
\begin{equation}
\mathcal{D}_1 \triangleq  \left\lbrace \left(R_t,R_s\right) \vert (a) < (b), R_t\geq R_s \geq 0 \right\rbrace,
\end{equation}
\begin{equation}
\mathcal{\hat{D}}\  \triangleq  \left\lbrace \left(R_t,R_s\right) \vert (a) = (b), R_t\geq R_s \geq 0 \right\rbrace,
\end{equation}
\begin{equation}
\mathcal{D}_2 \triangleq  \left\lbrace \left(R_t,R_s\right) \vert (a) > (b), R_t\geq R_s \geq 0 \right\rbrace,
\end{equation}
where rate region $\mathcal{\hat{D}}$ denotes the boundary between regions $\mathcal{D}_1$ and $\mathcal{D}_2$.
From the discussion above, if $\left(R_t,R_s\right) \in \mathcal{D}_1$, the communication process leads to energy accumulation, while if $\left(R_t,R_s\right) \in \mathcal{D}_2 \cup \mathcal{\hat{D}}$, it is energy balanced.

Using \eqref{NAE_N=1}, the optimization problem in \eqref{my_first_opti} can be rewritten as
\begin{equation} \label{opt_new}
\begin{aligned}
\max\limits_{ R_t, R_s} \ &\ \pi\left(\mathcal{P}^{\star}_J\right) \\
\mathrm{s.t.}  \ R_t \geq & R_s \geq 0.
\end{aligned}
\end{equation}
The optimization problem in \eqref{opt_new} can be solved with global optimal solution. 
The solution for $N_J =1$ and $N_J > 1$ are presented in the next two subsections.
\subsection{Optimal Rate Parameters with Single-Antenna Jammer}
\begin{proposition} \label{P3}
	\textnormal{When $N_J = 1$, the optimal $R_t$ and $R_s$ can be obtained by using the following facts:\\
	{\textbf{IF}} $\left(R^{\star}_t, R^{\star}_s\right) \in \mathcal{D}_1$,
	i.e., the case of energy accumulation,
	$R^{\star}_s$ is the unique root of equation (monotonic increasing on the left side):
	\begin{equation} \label{Middel_4_new}
	k_2\left(2^{R_s} + \frac{2^{R_s} -1}{\xi}\right) \left({R_s} \ln2-1 +\frac{{R_s}\ln2}{\xi} \right)  =1,
	\end{equation}
	and $R^{\star}_t$ is given by 
	\vspace*{-5pt}
	\begin{equation} \label{P3_Rb}
	R^{\star}_t = R^{\star}_s +\log_2\left( 1 + \xi^{\star}\right),
	\end{equation}
	where
	\begin{equation} \label{y_previous}
	\xi\!=\!\!
	\frac{1}{2}\!
	\left(\!\!
	-\frac{k_2 \left(2^{R_s} \!-\!1 \right)}{1\!+\!k_2 2^{R_s}}
	\!+\!\!\left(\!\!
	\left(\!\frac{k_2 \left(2^{R_s} \!-\!1 \right)}{1\!+\!k_2 2^{R_s}}\right)^{\!\!2}
	\!\!+\!\!\frac{4 \rho_d k_2 \left( 1 \!-\!\frac{1}{2^{R_s}} \right)}{1\!+\!k_2 2^{R_s}}\!
	\right)^{\!\!\!\frac{1}{2}}\!
	\right)\!\!,
	\end{equation}
	and $\xi^{\star}$ is obtained by taking $R^{\star}_s$ into \eqref{y_previous}.\\
	{\textbf{ELSE}},  $\left(R^{\star}_t, R^{\star}_s\right) \in \mathcal{\hat{D}}$,
	i.e., the energy balanced case,
	and	$R^{\star}_t$ is the root of following equation which can be easily solved by a linear search:
	\begin{equation} \label{P3_U}
	\zeta' \left(
	\frac{1+\frac{k_1}{\zeta}}{\ln 2 \left(1+\zeta\right)}
	- \frac{k_1 \left(R_t - \log_2 \left(1+\zeta\right)\right)}{\zeta^2}
	\right)=1,
	\end{equation}
	where
	\begin{equation} \label{U_new}
	\zeta = \frac{k_1 - k_2 e^{\frac{2^{R_t}-1}{\rho_d}} \left(2^{R_t}-1\right)}{e^{\frac{2^{R_t}-1}{\rho_d}} -1},
	\end{equation}
	\begin{equation}
	\zeta'\!\! =\!\! \frac{\ln 2 \ e^{\frac{2^{R_t}-1}{\rho_d}}}{\left(e^{\frac{2^{R_t}-1}{\rho_d}} \!-\!1\right)^2}\!
	\left(
	k_2 \ 2^{R_t} \left(1\!+\!\frac{1}{\rho_d} \!- \!e^{\frac{2^{R_t}-1}{\rho_d}} \right) - \frac{k_1 \!+\!k_2}{\rho_d}
	\right),
	\end{equation}
	\begin{equation} \label{def_k1}
	k_1 = \frac{d^m_{SJ}}{\eta} \frac{d^m_{JE}}{d^m_{SE}} {\left(\varepsilon^{-1}-1\right)},
	\end{equation}
	\begin{equation} \label{def_k2}
	k_2= \frac{d^m_{JE}}{d^m_{SE}} \frac{d^m_{SD}}{d^m_{JD}}  {\left(\varepsilon ^{-1}-1\right)},
	\end{equation}	
	and $R^{\star}_s = R^{\star}_t - \log_2\left(1+\zeta^{\star}\right)$, where $\zeta^{\star}$ is calculated by taking $R^{\star}_t$ into \eqref{U_new}.}
\end{proposition}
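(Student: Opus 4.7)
The plan is to partition the feasible region $\{(R_t,R_s) : R_t \geq R_s \geq 0\}$ into $\mathcal{D}_1$, $\hat{\mathcal{D}}$, and $\mathcal{D}_2$ by the sign of $(a)-(b)$, since on these three pieces the throughput in \eqref{NAE_N=1} takes different functional forms ($\pi = R_s/(1+(b))$ on $\mathcal{D}_1$, $\pi = R_s/(1+(a))$ on $\mathcal{D}_2$). I would first rule out the interior of $\mathcal{D}_2$: the term $(a)=k_1/(2^{R_t-R_s}-1)$ depends only on the rate gap $R_t-R_s$, so sliding $(R_t,R_s)$ along a line $R_t-R_s=\text{const}$ in the direction of increasing $R_s$ keeps $(a)$ fixed but strictly increases the numerator, hence strictly increases $\pi$. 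Because the exponential factor of $(b)$ grows with $R_t$, this motion must eventually cross $\hat{\mathcal{D}}$, so the supremum of $\pi$ over $\mathcal{D}_2\cup\hat{\mathcal{D}}$ is attained on $\hat{\mathcal{D}}$. The global problem thus reduces to comparing an interior stationary point of $\pi$ in $\mathcal{D}_1$ against the optimum along $\hat{\mathcal{D}}$.

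For the interior of $\mathcal{D}_1$ I would change coordinates from $(R_t,R_s)$ to $(\xi,R_s)$ with $\xi=2^{R_t-R_s}-1>0$, and write $1+(b)=A\,B$ with $A=e^{(2^{R_s}(1+\xi)-1)/\rho_d}$ and $B=1+k_2(2^{R_s}(1+\xi)-1)/\xi$. The stationary condition $\partial(b)/\partial\xi=0$ reduces, after clearing denominators, to a quadratic in $\xi$ of the form $(1+k_2 2^{R_s})\xi^2+k_2(2^{R_s}-1)\xi-\rho_d k_2(1-2^{-R_s})=0$, whose unique positive root is exactly \eqref{y_previous}. The remaining condition $(1+(b))=R_s\,\partial(b)/\partial R_s$ is then combined with the $\xi$-stationarity (which supplies a clean relation for $B$ in terms of $\xi$ and $R_s$) so that the common factor $A$ cancels; after substitution of $u=2^{R_s}(1+\xi)-1$, the identity collapses to \eqref{Middel_4_new}. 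Finally, \eqref{P3_Rb} is just $R_t=R_s+\log_2(1+\xi)$ evaluated at the stationary pair. Showing that the LHS of \eqref{Middel_4_new} is strictly increasing in $R_s$ (with $\xi$ substituted from \eqref{y_previous}) then gives uniqueness and justifies a linear search.

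For the boundary $\hat{\mathcal{D}}$, I would use $(a)=(b)$ as an implicit definition of a curve. Since $(a)=k_1/\xi$ and $(b)$ is affine in $1/\xi$ for fixed $R_t$, the equation is linear in $1/\xi$ and solves explicitly for $\xi=\zeta(R_t)$, yielding \eqref{U_new}. Substituting $R_s=R_t-\log_2(1+\zeta)$ turns $\pi=R_s/(1+k_1/\zeta)$ into a one-variable function of $R_t$, and setting $d\pi/dR_t=0$—with $\zeta'(R_t)$ supplied by the chain rule through \eqref{U_new}—rearranges into \eqref{P3_U}. For the case split, if the interior stationary point computed from \eqref{Middel_4_new}--\eqref{y_previous} lies in $\mathcal{D}_1$, continuity of $\pi$ across $\hat{\mathcal{D}}$ together with uniqueness of the interior stationary point implies that it dominates the boundary maximum; otherwise the interior solution is infeasible and the max is delivered by \eqref{P3_U}. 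I expect the main obstacle to be the algebraic condensation of the two coupled first-order conditions on $\mathcal{D}_1$ into the clean single-variable form \eqref{Middel_4_new}, together with the monotonicity check on its LHS that makes the one-dimensional root search well posed; the $\hat{\mathcal{D}}$ derivation is structurally simpler because $\zeta(R_t)$ is closed-form.
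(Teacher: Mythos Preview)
Your proposal is correct and follows essentially the same route as the paper: rule out the interior of $\mathcal{D}_2$ by monotonicity, solve the first-order conditions on $\mathcal{D}_1$, and parametrize $\hat{\mathcal{D}}$ by $R_t$ via $\zeta(R_t)$. The one practical difference is that the paper works in the coordinates $(\varsigma,R_s)=(2^{R_t},R_s)$ rather than your $(\xi,R_s)$; since the exponential factor $e^{(\varsigma-1)/\rho_d}$ is constant under $\partial/\partial R_s|_{\varsigma}$, equation \eqref{Middel_4_new} drops out from that single partial derivative alone, so the ``algebraic condensation of the two coupled first-order conditions'' you flag as the main obstacle simply does not arise.
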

\begin{proof}
	See Appendix C.
\end{proof}
Note that the optimal $\left(R_t,R_s\right)$ never falls in region $\mathcal{D}_2$. 
This is because the throughput in $\mathcal{D}_2$ increases towards the boundary of $\mathcal{D}_1$ and $\mathcal{D}_2$, that is $\hat{\mathcal{D}}$. 
The detailed explanation is given in Appendix C. 
\par
Proposition \ref{P3} can then be used to obtain the optimal values of $R_t$ and $R_s$ as follows. 
We firstly assume the optimal results are in the region $\mathcal{D}_1$, thus, $R_s$ and $R_t$ can be obtained by equation \eqref{Middel_4_new} and \eqref{P3_Rb}. Then, we check whether the results are really in $\mathcal{D}_1$. 
If they are, we have obtained the optimal results. 
If not, we solve equation \eqref{P3_U} to obtain the optimal $R_t$ and $R_s$.

\subsubsection{High SNR Regime}
We want know whether we can largely improve throughput by increasing 
the transmit power at the source,
$\mathcal{P}_s$, thus, we consider the high SNR regime.
Note that we have defined SNR at the destination (without the effect of jamming noise) as $\rho_d$ in \eqref{m_rho_d}.
\begin{corollary} \label{C1}
	\textnormal{When $N_J = 1$ and the SNR at the destination is sufficiently high, the asymptotically optimal rate parameters and an upper bound of throughput are given by
\begin{subequations}
		\begin{align} \label{}
	\tilde{R}^{\star}_s &= \frac{1 + \mathrm{W}_0 \left(\frac{1}{e k_2}\right)}{\ln 2},
	\\
	\tilde{R}^{\star}_t &= \tilde{R}^{\star}_s + \log_2 \left(1+ \tilde{\xi}^{\star}\right),
	\\
	\tilde{\pi}^{\star}
	&= \frac{\mathrm{W}_0 \left(\frac{1}{e k_2}\right)}{\ln 2},
	\end{align}
\end{subequations}
	where $k_2$ is defined in \eqref{def_k2},
	\vspace*{-6pt}
	\begin{equation}
	\tilde{\xi}^{\star} =
	\left(\frac{\rho_d k_2 \left(1 - \frac{1}{2^{\tilde{R}^{\star}_s}}\right)}{1 + k_2 2^{\tilde{R}^{\star}_s}}\right)
	^{\frac{1}{2}},
	\end{equation}
	and $\mathrm{W}_0 \left(\cdot \right)$ is the principle branch of the Lambert W function~\cite{lambert}.}
\end{corollary}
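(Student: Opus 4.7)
The plan is to specialize Proposition \ref{P3} to the high-SNR regime $\rho_d \to \infty$ and extract the leading-order behaviour of $R^{\star}_s$, $R^{\star}_t$ and $\pi^{\star}$. I would first argue that for sufficiently large $\rho_d$ the optimum lies in the energy-accumulation region $\mathcal{D}_1$: after the derivation below, one checks that $\xi \to \infty$, so $(a) = k_1/\xi \to 0$ while $(b) \to 1 + k_2\, 2^{R_s}$ stays bounded, confirming $(a) < (b)$. This justifies using the characterisation via \eqref{Middel_4_new}--\eqref{y_previous} and taking the denominator of the throughput \eqref{NAE_N=1} to be $1 + (b)$.

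The first step is to identify the leading behaviour of $\xi$ in \eqref{y_previous}. The term under the square root is the sum of an $O(1)$ quantity and the $\rho_d$-linear quantity $4\rho_d k_2 (1 - 2^{-R_s})/(1 + k_2 2^{R_s})$, so in the limit the square root dominates and $\xi \sim \sqrt{\rho_d k_2 (1 - 2^{-R_s})/(1 + k_2 2^{R_s})}$, which is precisely $\tilde{\xi}^{\star}$ evaluated at the limiting $R_s$. The second step is to substitute $\xi \to \infty$ into the implicit equation \eqref{Middel_4_new}: the two $1/\xi$ corrections vanish and the fixed-point condition degenerates to $k_2\, 2^{R_s}(R_s \ln 2 - 1) = 1$. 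Setting $u = R_s \ln 2 - 1$, so that $2^{R_s} = e \cdot e^{u}$, turns this into $u e^{u} = 1/(e k_2)$, whose unique nonnegative solution is $u = \mathrm{W}_0 \!\left(1/(e k_2)\right)$ by definition of the principal branch. This yields $\tilde{R}^{\star}_s$, and $\tilde{R}^{\star}_t$ then follows directly from \eqref{P3_Rb}.

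The final step is to evaluate the throughput. Because $2^{R_t} - 1 = 2^{R_s}(1+\xi) - 1 = O(\sqrt{\rho_d})$ at the optimum, the exponent in $(b)$ satisfies $(2^{R_t}-1)/\rho_d = O(1/\sqrt{\rho_d}) \to 0$, so the prefactor $e^{(2^{R_t}-1)/\rho_d}$ tends to $1$. Simultaneously $(2^{R_t}-1)/(2^{R_t - R_s}-1) = 2^{R_s} + (2^{R_s}-1)/\xi \to 2^{R_s}$, so $1 + (b) \to 1 + k_2\, 2^{\tilde{R}^{\star}_s}$. Applying the Lambert-W identity $e^{\mathrm{W}_0(x)} = x/\mathrm{W}_0(x)$ to $2^{\tilde{R}^{\star}_s} = e \cdot e^{\mathrm{W}_0(1/(e k_2))}$ gives $k_2\, 2^{\tilde{R}^{\star}_s} = 1/\mathrm{W}_0(1/(e k_2))$, hence $1 + (b) \to (\mathrm{W}_0 + 1)/\mathrm{W}_0$ with $\mathrm{W}_0 = \mathrm{W}_0(1/(e k_2))$. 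Combined with $\tilde{R}^{\star}_s = (1 + \mathrm{W}_0)/\ln 2$, the ratio collapses to $\tilde{\pi}^{\star} = \mathrm{W}_0(1/(e k_2))/\ln 2$.

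The hard part of the argument is not any single calculation but rather the bookkeeping: one must check that the $o(1)$ corrections dropped in passing from \eqref{Middel_4_new} to its degenerate form, and from $(b)$ to $k_2 \, 2^{R_s}$, are genuinely uniform in the optimisation near $\rho_d = \infty$, so that taking the limit commutes with solving the fixed-point equation. The compact final form of $\tilde{\pi}^{\star}$ is the non-obvious payoff of the Lambert-W identity, which turns a seemingly transcendental expression $k_2 \, 2^{\tilde R^\star_s}$ into the reciprocal of $\mathrm{W}_0(1/(e k_2))$; once this is spotted, the remaining algebra is mechanical.
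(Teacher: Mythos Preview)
Your proof is correct and follows essentially the same route as the paper's Appendix~D: assume the optimum lies in $\mathcal{D}_1$, let $\rho_d \to \infty$ so that $\xi \to \infty$ in \eqref{y_previous}, reduce \eqref{Middel_4_new} to $k_2\,2^{R_s}(R_s\ln 2 - 1) = 1$, solve via the Lambert $\mathrm{W}_0$ substitution, and verify the $\mathcal{D}_1$ assumption a posteriori from $2^{R_t} = \mathcal{O}(\rho_d^{1/2})$. You in fact go further than the paper by spelling out the throughput limit explicitly via the identity $e^{\mathrm{W}_0(x)} = x/\mathrm{W}_0(x)$, which the paper leaves as ``from Proposition~\ref{P3} and \eqref{NAE_N=1}, optimal $\pi$ is obtained.'' (One minor slip: $(b) \to k_2\,2^{R_s}$, not $1 + k_2\,2^{R_s}$, since $(b)$ already has the $-1$; this is immaterial to the $(a)<(b)$ check.)
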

\par
\begin{proof}
	See Appendix D.
\end{proof}
\par

\textit{Remarks:} \par
\begin{enumerate}[i)]
\item  The upper bound of throughput 
implies that one cannot effectively improve the throughput by further increasing $\mathcal{P}_s$ when the SNR at the destination is already high. 
\item 
It can be checked that when $\mathcal{P}_s$ is sufficiently high, 
the optimized communication process leads to energy accumulation. 
Intuitively, this implies that when the available harvested energy is very large, 
the jammer should store a significant portion of the harvested energy in the battery rather than fully using it, because too much jamming noise can have adverse impact on SINR at the destination in this single-antenna jammer scenario. 
This behavior will also be verified in Section VI, Fig. \ref{fig:compare4}.
\end{enumerate}

\subsection{Optimal Rate Parameters with Multiple-Antenna Jammer}
\begin{proposition} \label{P4}
	\textnormal{When $N_J > 1$, the optimal $R_s$ and $R_t$ are in region $\mathcal{\hat{D}}$ which also means that the optimal communication process is in the energy balanced case, and the optimal values are given by
	\begin{equation} \label{}
	\begin{aligned}
	R^{\star}_t&=\log_{2} z^{\star},\\
	R^{\star}_s&=\log_{2} \frac{z^{\star}}{1+\frac{M}{e^{\frac{z^{\star}-1}{\rho_d}}-1}},
	\end{aligned}
	\end{equation}
	where $z^{\star}$ is calculated as the unique solution of
	\begin{equation} \label{equal_previous}
	\frac{\rho_d}{z}-\ln z
	+ \ln \!\left(\!1\!+\!\frac{M}{e^{\frac{z-1}{\rho_d}}\!-\!1}\!\right)
	+ \frac{M e^{\frac{z-1}{\rho_d}}}
	{\left(e^{\frac{z-1}{\rho_d}}\!-\!1\right)^2 \!\!+\! M \left(e^{\frac{z-1}{\rho_d}}\!-\!1\!\right)}
	\!= \!0,
	\end{equation}
	and
	\begin{equation} \label{my_M}
	M=
	\frac{d^m_{SJ}}{N_J \eta} \frac{d^m_{JE}}{d^m_{SE}}{\left(N_J-1\right)\left(\varepsilon^{-\frac{1}{N_J-1}}-1\right)}.
	\end{equation}}
\end{proposition}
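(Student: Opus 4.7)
The plan is to first reduce the two-dimensional optimization in \eqref{opt_new} to a one-dimensional problem on the boundary $\mathcal{\hat{D}}$, and then apply a first-order condition. To this end, I fix $R_s$ and regard $\pi$ as a function of $R_t \geq R_s$ only. The term $(a) = M/(2^{R_t-R_s}-1)$ in \eqref{NAE_N=1} is strictly decreasing in $R_t$, going from $+\infty$ at $R_t = R_s$ down to $0$ as $R_t \to \infty$, while $(b) = e^{(2^{R_t}-1)/\rho_d}-1$ is strictly increasing. Hence $(a)=(b)$ has a unique solution in $R_t$, which partitions $[R_s,\infty)$ into $\mathcal{D}_2$ (smaller $R_t$, where $\pi = R_s/(1+(a))$ is increasing in $R_t$) and $\mathcal{D}_1$ (larger $R_t$, where $\pi = R_s/(1+(b))$ is decreasing in $R_t$). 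Therefore $\pi$ is maximized at the interface for every fixed $R_s$, so the global optimum $(R^{\star}_t, R^{\star}_s)$ lies on $\mathcal{\hat{D}}$, i.e., the communication process is energy balanced.

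Next, I parameterize $\mathcal{\hat{D}}$ by $z = 2^{R_t}$. Solving $(a)=(b)$ for $R_t - R_s$ yields $2^{R_t-R_s}-1 = M/(e^{(z-1)/\rho_d}-1)$, hence
\begin{equation*}
R_s(z) = \log_{2}\frac{z}{1+\frac{M}{e^{(z-1)/\rho_d}-1}},
\end{equation*}
which matches the claimed expression. Along $\mathcal{\hat{D}}$ the throughput simplifies to $\pi(z) = R_s(z)\,e^{-(z-1)/\rho_d}$, a smooth scalar function on the feasible interval $z \in [z_{\min},\infty)$ determined by the constraint $R_s(z)\geq 0$. Since $\pi(z_{\min})=0$ and $\pi(z)\to 0$ as $z\to\infty$ (exponential decay beats the logarithmic growth of $R_s$), the maximum is attained in the interior.

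Finally, I apply the first-order condition. Writing $A := e^{(z-1)/\rho_d}$ and using the identity $\frac{1}{A-1}-\frac{1}{A-1+M}=\frac{M}{(A-1)(A-1+M)}$, the derivative of $\ln(1+M/(A-1))$ equals $-AM/[\rho_d (A-1)(A-1+M)]$. Setting $d\pi/dz = 0$ and multiplying through by $\rho_d$ then collapses to precisely \eqref{equal_previous}, from which $R^{\star}_t = \log_2 z^{\star}$ and $R^{\star}_s = R_s(z^{\star})$ follow. The main obstacle is uniqueness of $z^{\star}$: the left-hand side of \eqref{equal_previous} diverges to $+\infty$ as $z\to z_{\min}^{+}$ (driven by the $\ln(1+M/(A-1))$ and $AM/[(A-1)(A-1+M)]$ terms) and to $-\infty$ as $z\to\infty$ (driven by $-\ln z$), so at least one root exists; uniqueness I expect to secure either by a direct sign-of-derivative analysis on this expression or, more cleanly, by verifying quasi-concavity of $\pi(z)$ on $[z_{\min},\infty)$, which together with the vanishing boundary values forces a single interior maximizer.
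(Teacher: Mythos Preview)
Your proposal is correct and follows essentially the same route as the paper: reduce to the boundary $\mathcal{\hat{D}}$ by monotonicity of $(a)$ and $(b)$ in $R_t$, parameterize by $z=2^{R_t}$, and set the derivative of $\pi(z)=R_s(z)\,e^{-(z-1)/\rho_d}$ to zero. On the one point you flag as an obstacle, the paper simply asserts that the left-hand side of \eqref{equal_previous} is monotone decreasing in $z$ (hence the root is unique), so your ``direct sign-of-derivative analysis'' is exactly what the paper invokes without writing out.
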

\begin{proof}
	See Appendix E.
\end{proof}

We can see that the left side of \eqref{equal_previous} is a monotonic decreasing function of $z$.
Thus, $z$ can be easily obtained by using numerical methods.
\par

\subsubsection{High SNR Regime}
Similar to the single-antenna jammer case, we are interested in whether increasing the source transmission power $\mathcal{P}_s$, is an effective way of improving throughput. 
Hence the high SNR regime is considered:

\begin{corollary} \label{C3}
	\textnormal{When $N_J > 1$ and the SNR at the destination is sufficiently high,
	the asymptotically optimal rate parameters and an upper bound of throughput are given by
	\begin{subequations}
		\begin{align}
		\label{slow_N>1_Op_R_b}
		\tilde{R}^{\star}_t &= \log_2\left(2\rho_d\right) -\log_2\left(\mathrm{W_0}\left(2 \rho_d\right)\right),
		\\
		\label{slow_N>1_Op_R_s}
		\tilde{R}^{\star}_s &
		= \frac{2 \mathrm{W_0}\left(2 \rho_d\right)}{\ln 2} - \log_2\left(M \rho_d\right),
		\\
		\label{slow_N>1_Op_ans}
		\tilde{\pi}^{\star} &
		= \tilde{R}^{\star}_s,
		\end{align}
	\end{subequations}
	where
	$
	z^{\star} = \frac{2 \rho_d}{\mathrm{W_0}\left(2 \rho_d\right)}
	$ 
	and 
	$M$ is defined in \eqref{my_M}.}
\end{corollary}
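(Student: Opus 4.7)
The plan is to push the exact characterisation from Proposition~\ref{P4} through a high-SNR asymptotic expansion. The object to analyse is the unique root $z^{\star}$ of equation \eqref{equal_previous}; once $z^{\star}$ is pinned down, everything else is explicit: $R_t^{\star}=\log_2 z^{\star}$, $R_s^{\star}$ is the closed-form expression in $z^{\star}$, and the throughput on the energy-balanced branch reduces to $\pi^{\star}=R_s^{\star}\,e^{-(z^{\star}-1)/\rho_d}$ via Theorem~\ref{P1}.

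First I would make and verify a scaling ansatz: as $\rho_d\to\infty$, $z^{\star}\to\infty$ but $z^{\star}/\rho_d\to 0$. Setting $u := e^{(z-1)/\rho_d}-1$ then gives $u = z/\rho_d + O((z/\rho_d)^2)$. A term-by-term size audit of \eqref{equal_previous} yields: $\rho_d/z$ and $-\ln z$ are each $\Theta(\ln\rho_d)$; the fractional term $M(u+1)/[u(u+M)]$, because $u\to 0$ and the fixed constant $M$ dominates $u$, reduces to $1/u+O(1)\approx \rho_d/z$, also $\Theta(\ln\rho_d)$; whereas $\ln(1+M/u)\approx \ln(M\rho_d/z)=O(\ln\ln\rho_d)$ and is negligible. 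Retaining only the $\Theta(\ln\rho_d)$ contributions collapses \eqref{equal_previous} to $2\rho_d/z=\ln z$, i.e.\ $z\ln z=2\rho_d$. Substituting $z=2\rho_d/w$ turns this into the defining identity $w+\ln w=\ln(2\rho_d)$ of the Lambert $W$ function, so $w=W_0(2\rho_d)$ and $z^{\star}=2\rho_d/W_0(2\rho_d)$. This is self-consistent with the ansatz, since $W_0(2\rho_d)\to\infty$ forces $z^{\star}/\rho_d=2/W_0(2\rho_d)\to 0$.

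With $z^{\star}$ in hand, $\tilde{R}_t^{\star}=\log_2 z^{\star}=\log_2(2\rho_d)-\log_2 W_0(2\rho_d)$ is immediate, matching \eqref{slow_N>1_Op_R_b}. For $\tilde{R}_s^{\star}$ I substitute $z^{\star}$ and $u^{\star}\approx 2/w$ into $R_s^{\star}=\log_2 z^{\star}-\log_2(1+M/u^{\star})$, approximating $1+M/u^{\star}\approx Mw/2$ since $Mw/2\gg 1$; the Lambert identity $we^w=2\rho_d$ then rewrites the resulting $\log_2(4\rho_d/(Mw^2))$ as $2w/\ln 2-\log_2(M\rho_d)$, which is \eqref{slow_N>1_Op_R_s}. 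Finally, $(z^{\star}-1)/\rho_d=2/w+o(1/w)\to 0$, so $e^{-(z^{\star}-1)/\rho_d}\to 1$ and $\tilde\pi^{\star}=\tilde{R}_s^{\star}$ in the high-SNR limit, giving \eqref{slow_N>1_Op_ans}.

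The main obstacle is making the informal term-size audit rigorous. I would handle it by a bootstrap argument: write $z=2\rho_d(1+\delta)/W_0(2\rho_d)$, substitute into the full left-hand side of \eqref{equal_previous}, verify that all dropped contributions are $o(\ln\rho_d)$ uniformly in a shrinking neighbourhood of $\delta=0$, and then invoke the monotonicity of the left-hand side (already used in Proposition~\ref{P4} to prove uniqueness of $z^{\star}$) to upgrade an approximate zero at the candidate point into quantitative control of $\delta$ and thus show that the true root agrees with the claimed Lambert-W expression to leading order.
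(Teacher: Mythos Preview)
Your proposal is correct and follows essentially the same route as the paper: reduce \eqref{equal_previous} to $2\rho_d/z=\ln z$ in the high-SNR limit, solve via the Lambert $W$ function to get $z^{\star}=2\rho_d/W_0(2\rho_d)$, and then read off $\tilde R_t^{\star}$, $\tilde R_s^{\star}$, $\tilde\pi^{\star}$ from Proposition~\ref{P4} and \eqref{NAE_N=1}. The paper's Appendix~F gives only a one-line sketch of this, so your term-by-term audit and bootstrap justification are a genuine fleshing-out of exactly the argument the authors intend.
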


\begin{proof}
See Appendix F.	
\end{proof}\par

\par
\textit{Remarks:} 
\begin{enumerate}[i)]
	\item The throughput will always increase with increasing transmit power $\mathcal{P}_s$ (because $\rho_d$ increases as $\mathcal{P}_s$ increases). 
	This is in contrast to the single-antenna jammer case, because the multi-antenna jamming method only interferes the $S \rightarrow E$ link.
	\item 
	From Proposition \ref{P4} and Corollary \ref{C3}, when $\mathcal{P}_s$ is sufficiently large, the optimized communication process is still energy balanced, which is different from the single-antenna jammer scenario. 
Intuitively, 
storing extra energy is not a good choice, 
because we can always use the accumulated energy to jam at the eavesdropper without affecting the destination, which in turn improves the throughput.
\end{enumerate}

\subsubsection{Large $N_J$ Regime}
We also want to know that whether we can largely improve the throughput by increasing the number of antennas at the jammer.
\begin{corollary} \label{C2}
	\textnormal{In large $N_J$ scenario,
	the asymptotically optimal rate parameters and an upper bound of throughput are given by
\begin{subequations}
		\begin{align}
	\tilde{R}^{\star}_t &= \frac{\mathrm{W}_0 \left(\rho_d\right)}{\ln 2},
	\\
	\tilde{R}^{\star}_s &= \log_2 \frac{e^{\mathrm{W}_0 \left(\rho_d\right)}}{1+\frac{M}{e^{\frac{e^{\mathrm{W}_0 \left(\rho_d\right)}-1}{\rho_d}}-1}},
	\\
	\tilde{\pi}^{\star} &= \frac{\mathrm{W}_0\left(\rho_d\right)}{\ln 2 \ e^{\frac{1}{\mathrm{W}_0\left(\rho_d\right)} - \frac{1}{\rho_d}}},
	\end{align}
\end{subequations}
	where $M$ is defined in \eqref{my_M}.}
\end{corollary}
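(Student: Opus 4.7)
The plan is to invoke Proposition~\ref{P4} and analyze the behaviour of its defining equation \eqref{equal_previous} as $N_J\to\infty$. The key observation is that, in this regime, the parameter $M$ from \eqref{my_M} tends to zero: using the Taylor expansion $\varepsilon^{-1/(N_J-1)}-1 = -\ln\varepsilon/(N_J-1) + O(1/(N_J-1)^2)$ together with the prefactor $(N_J-1)/N_J\to 1$, one obtains $M = O(1/N_J)$, and hence $M\to 0$.

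With $M\to 0$, the two $M$-dependent terms in \eqref{equal_previous}, namely $\ln\bigl(1+M/(e^{(z-1)/\rho_d}-1)\bigr)$ and $Me^{(z-1)/\rho_d}/\bigl((e^{(z-1)/\rho_d}-1)^2 + M(e^{(z-1)/\rho_d}-1)\bigr)$, both vanish uniformly on any compact subset of $(1,\infty)$. The equation therefore collapses to $\rho_d/z - \ln z = 0$, equivalently $z\ln z = \rho_d$. Writing $z = e^w$ turns this into $we^w = \rho_d$, whose principal solution is $w = \mathrm{W}_0(\rho_d)$, so that $z^\star \to e^{\mathrm{W}_0(\rho_d)}$ as $N_J\to\infty$.

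Substituting this limiting value into $R^\star_t = \log_2 z^\star$ immediately yields $\tilde R^\star_t = \mathrm{W}_0(\rho_d)/\ln 2$; plugging it into the expression for $R^\star_s$ from Proposition~\ref{P4} reproduces the stated $\tilde R^\star_s$. For the throughput, Proposition~\ref{P4} places us in the energy-balanced regime where $(a)=(b)$, so $p_{tx}^\star = 1/(1+(b)) = e^{-(2^{R^\star_t}-1)/\rho_d}$. Applying the Lambert W identity $e^{\mathrm{W}_0(\rho_d)} = \rho_d/\mathrm{W}_0(\rho_d)$ simplifies the exponent to $1/\mathrm{W}_0(\rho_d) - 1/\rho_d$; multiplying by the leading-order limit $\tilde R^\star_s \to \mathrm{W}_0(\rho_d)/\ln 2$ (as the $M$-dependent log correction vanishes) recovers the claimed $\tilde\pi^\star = \mathrm{W}_0(\rho_d)/\bigl(\ln 2\cdot e^{1/\mathrm{W}_0(\rho_d)-1/\rho_d}\bigr)$.

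The main technical hurdle is to justify that the root $z^\star$ of the full transcendental equation is continuous in $M$ at $M=0$, so that $z^\star\to e^{\mathrm{W}_0(\rho_d)}$ is rigorous rather than heuristic. This is settled by the monotonicity remark immediately following Proposition~\ref{P4}: because the left-hand side of \eqref{equal_previous} is monotonically decreasing in $z$ for each $M\geq 0$ and depends continuously on $M$, the unique root $z^\star(M)$ inherits continuity in $M$, and the limit follows. A small subtlety worth flagging in the write-up is that the stated $\tilde R^\star_s$ still carries $M$ explicitly, so Corollary~\ref{C2} should be read as a large-$N_J$ asymptotic approximation (obtained by replacing $z^\star$ by its limit while leaving $M$ in place elsewhere) rather than a strict pointwise limit, the leading-order consequence of which is the throughput upper bound $\tilde\pi^\star$ above.
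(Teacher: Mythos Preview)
Your proposal is correct and follows essentially the same route as the paper: observe that $M\to 0$ as $N_J\to\infty$, reduce \eqref{equal_previous} to $\rho_d/z-\ln z=0$, solve via $z^\star=e^{\mathrm{W}_0(\rho_d)}$, and substitute back into the formulas of Proposition~\ref{P4} and \eqref{NAE_N=1}. In fact you supply considerably more detail than the paper's four-line argument---the explicit Taylor expansion for $M$, the continuity justification for $z^\star(M)$, the Lambert-W simplification of the throughput exponent, and the remark about $M$ still appearing in $\tilde R^\star_s$ are all your additions and are all sound.
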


\begin{proof}
See Appendix F.
\end{proof}\par

\par
\textit{Remark:}
 Corollary \ref{C2} gives an asymptotic upper bound on throughput for this protocol, thus, 
$\pi$ does not increase towards infinity with $N_J$. 
Intuitively, the throughput cannot always increase with $N_J$, because it is bounded by the  $S \rightarrow D$ channel capacity which is independent with $N_J$.

\section{Numerical Results}
In this section, we present numerical results to demonstrate the performance of the proposed secure communication protocol. We set the path loss exponent as $m = 3$ and the length of time block as $T = 1$ millisecond. 
{{We set the energy conversion efficiency as $\eta= 0.5$ \cite{ZhangHo,JuOld,XunZhou}.}
	{Note that the practical designs of rectifier  for RF-DC conversion achieve the value of $\eta$ between~$0.1$ and $0.85$~\cite{WPT_survey}.}
	{Such a range makes wireless energy harvesting technology meaningful. A rectifier design
	with }{$\eta < 0.1$ }{is unlikely to be used in practice.}}
We assume that the source, jammer, destination and eavesdropper
are placed along a horizontal line,  
and the distances are given by $d_{SJ} =25$ m, $d_{SE} =40$ m, $d_{SD} =50$ m, $d_{JE} =15$ m, $d_{JD} =25$ m, 
in line with \cite{dong}. 
Unless otherwise stated, we set $\sigma^2_d = -100$ dBm, 
and the secrecy outage probability requirement $\varepsilon = 0.01$. 
We do not specify the bandwidth of communication, hence the rate parameters are expressed in units of bit per channel use (bpcu).
\par
To give some ideas about the energy harvesting process at the jammer under this setting:
	When $N_J = 1$ and $\mathcal{P}_s = 30$ dBm,
	the average power that can be harvested (after RF-DC conversion) is $-15$ dBm,
	thus, the overall energy harvesting efficiency (i.e., the ratio between the harvested power at the jammer and the transmit power at the source) is $(-15\text{ dBm})/(30\text{ dBm})$ $\approx 3\times 10^{-5}$.
	Note that, although the average harvested power at the jammer is relatively small, a small jamming power is sufficient to achieve good secure communication performance.
	For instance, the optimal jamming power under this setting is only $-13$ dBm based on the analytical results in Section~V.
	In order to transmit the jamming signal at the optimal power of $-13$ dBm with the average harvested power of $-15$ dBm, roughly $61\%$ of time is used for charging and $39\%$ of time is used for secure communication with jamming.	
\subsection{Energy Accumulation and Energy Balanced Cases}
\begin{figure}[t]
	\renewcommand{\captionlabeldelim}{ }
	\renewcommand{\captionfont}{\small} \renewcommand{\captionlabelfont}{\small}
	\centering
	\includegraphics[scale=0.62]{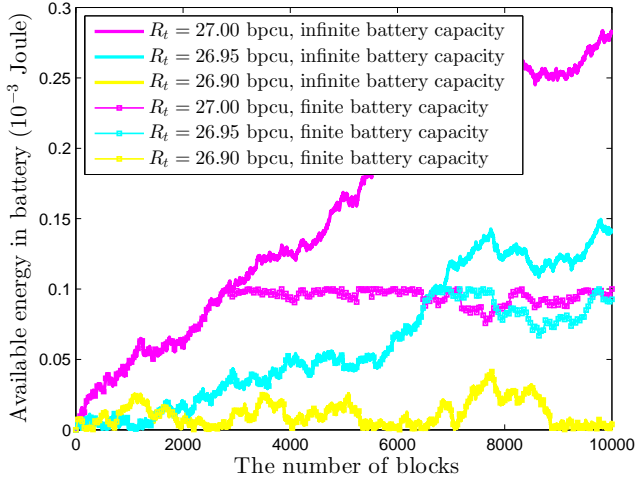}
	\vspace*{-0.5cm}
	\caption{Available energy in battery during the communication process.}	
	\label{fig:energy state}
	\vspace*{-0.3cm}
\end{figure}

Fig. \ref{fig:energy state} shows the simulation results on the available energy in the battery in the communication process.
The jammer has 8 antennas ($N_J = 8$) and the target jamming power is $\mathcal{P}_J = 0$ dBm. 
The source transmit power is $\mathcal{P}_s = 30$ dBm. 
Thus, the energy consumption in one IT block at the jammer, $\mathcal{P}_J T$, is $10^{-6}$ Joule, and the average harvested energy in one PT block, $\rho_J$, is $2.56 \times 10^{-7}$ Joule.
From Lemma \ref{L1} and \eqref{p_new_back}, when $\frac{p_{co}}{1-p_{co}}> \frac{\mathcal{P}_J T}{\rho_J}$ which means $R_t > 26.92$ bpcu, the communication process leads to energy accumulation, while if $R_t \leq 26.92$ bpcu, it is the energy balanced.
\par
{First, we focus on the curves with infinite battery capacity.} We can see that when $R_t =26.9$ bpcu, the available energy goes up and down, but does not have the trend of energy accumulation. 
Thus, the communication process is energy balanced. 
When $R_t = 26.95$ and $27$ bpcu, the available energy grows up, 
and the communication process leads to energy accumulation. 
Therefore, the condition given in Lemma \ref{L1} is verified.
\par
{{{In Fig. \ref{fig:energy state}, we also plot a set of simulation results with finite battery capacity as}}
	{${E_{\textrm{max}} = 0.1\times 10^{-3}}$ {Joule}}.
	{As we can see, for the energy accumulative cases, i.e., $R_t = 26.95$ and $27.00$ bpcu, 
		the energy level stays near the battery capacity ($0.1\times 10^{-3}$ Joule) after experienced a sufficient long time, which is much higher than the required jamming energy level $\mathcal{P}_J T = 10^{-6}$ Joule.
		Therefore, in practice, having a finite battery capacity has hardly any effect on the communication process, as compared with infinite capacity.}
}

\subsection{Rate Regions with Single-Antenna Jammer}
Fig. \ref{fig:compare4} plots the throughput in \eqref{NAE_N=1} with different $R_t$ and $R_s$  in the single-antenna jammer scenario. 
In Fig. \ref{fig:N_1_low}, we set $\mathcal{P}_s = 0$ dBm. The optimal rate parameters $\left(R^{\star}_t, R^{\star}_s\right)$ are obtained in the region $\hat{\mathcal{D}}$, which is the boundary of $\mathcal{D}_1$ and $\mathcal{D}_2$. 
This implies that the optimized communication process is energy balanced. 
In Fig. \ref{fig:N_1_high}, we increase $\mathcal{P}_s$ to $30$ dBm. The optimal rate parameters $\left(R^{\star}_t, R^{\star}_s\right)$ are obtained in the region ${\mathcal{D}}_1$. 
This implies that the optimized communication process leads to energy accumulation. 
This observation agrees with the remarks after Corollary \ref{C1} regarding the optimal operating point when the SNR at the destination is sufficiently large. 

\begin{figure}[t]
	\renewcommand{\captionlabeldelim}{ }
	\renewcommand{\captionfont}{\small} \renewcommand{\captionlabelfont}{\small}
	\centering
	  \hspace{-0.5in}
	\subfigure[$N_J=1$, $\mathcal{P}_s = 0$ dBm]{
		\label{fig:N_1_low} 
		\includegraphics[scale=0.37]{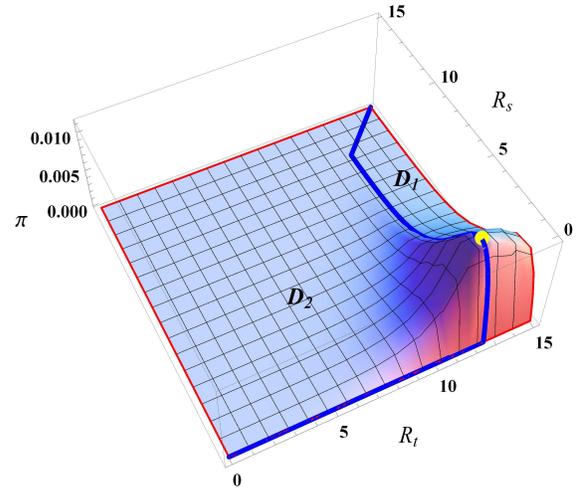}}
     \hspace{2.5in}
	\subfigure[$N_J=1$, $\mathcal{P}_s = 30$ dBm]{
		\label{fig:N_1_high} 
		\includegraphics[scale=0.37]{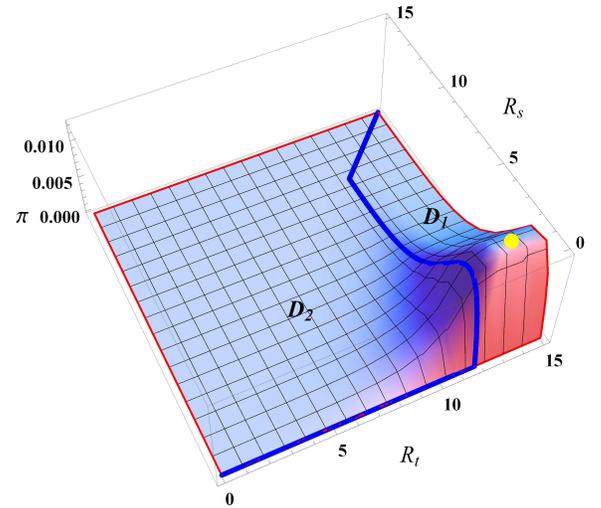}}	
	  \hspace{2.5in}	
	\caption{Optimal rate parameters for $N_J=1$.}
	\label{fig:compare4} 
	\vspace*{-15pt}
\end{figure}
\vspace*{-0.5cm}
\begin{figure}[t]
	\renewcommand{\captionlabeldelim}{ }
	\renewcommand{\captionfont}{\small} \renewcommand{\captionlabelfont}{\small}
	\centering
	\includegraphics[scale=0.62]{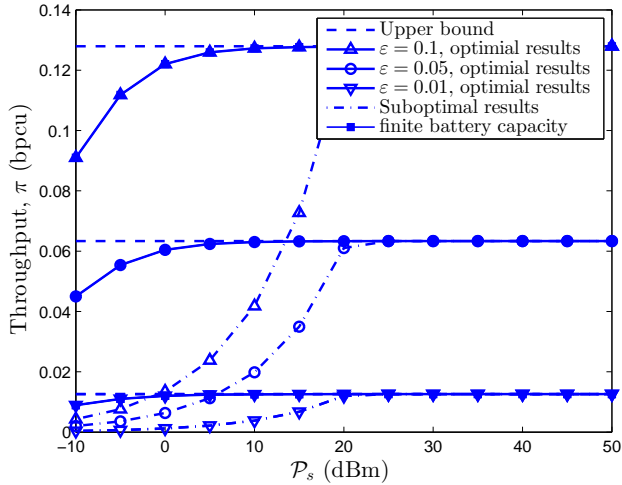}
	\vspace*{-0.5cm}
	\caption{Throughput versus source transmit power $\mathcal{P}_s$ for $N_J = 1$.}
	\label{fig:N0}
	\vspace*{-0.5cm}
\end{figure}

\subsection{Throughput Performance with Single-Antenna Jammer}
Fig. \ref{fig:N0} plots the throughput with optimal designs given by Proposition \ref{P3}. We also include the suboptimal performance which is achieved by using the asymptotically optimal rate parameters in Corollary \ref{C1}, as well as the upper bound on throughput in Corollary \ref{C1}.
\par
{First, we focus on the curves with infinite battery capacity.} We can see that when $\mathcal{P}_s = 5$~dBm, the optimal throughput almost reaches the upper bound. 
Also we can see that when $\mathcal{P}_s < 20$~dBm, the suboptimal performance has a large gap with the optimal one, while when $\mathcal{P}_s > 20$~dBm, the suboptimal performance is very close to the optimal one. 
\par
{{{In Fig. \ref{fig:N0}, we also plot a set of simulation results with finite battery capacity as}}
	{${E_{\textrm{max}}= 0.1}$ \\${ \times 10^{-3}}$ {Joule}}.					
	{It is easy to see that our analytical results for infinite battery capacity fit very well with the simulation results for finite battery capacity.
		Therefore, a practical finite battery capacity have negligible effect on the performance of the protocol,
		and our analysis are valid in the practical scenario.}
}

\subsection{Rate Regions with Multiple-Antenna Jammer}
Fig. \ref{fig:compare4_new} plots the throughput in \eqref{NAE_N=1} with different $R_t$ and $R_s$ in the multiple-antenna jammer scenario.
In Fig. \ref{fig:N_8_low} and Fig. \ref{fig:N_8_high}, we set $\mathcal{P}_s = 0$~dBm and $30$~dBm, respectively. 
The optimal rate parameters $\left(R^{\star}_t, R^{\star}_s\right)$ are both obtained in the region $\hat{\mathcal{D}}$. 
This implies that the optimized communication process is energy balanced, which agrees with the remarks after Corollary \ref{C3}. 

\begin{figure}[t]
	\renewcommand{\captionlabeldelim}{ }
	\renewcommand{\captionfont}{\small} \renewcommand{\captionlabelfont}{\small}
	\centering
	\subfigure[$N_J=8$, $\mathcal{P}_s= 0$ dBm]{
		\label{fig:N_8_low} 
		\includegraphics[scale=0.37]{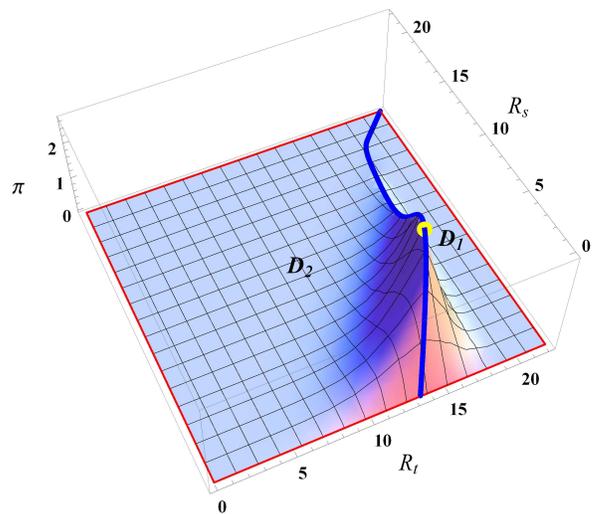}}	
	\subfigure[$N_J=8$, $\mathcal{P}_s= 30$ dBm]{
		\label{fig:N_8_high} 
		\includegraphics[scale=0.37]{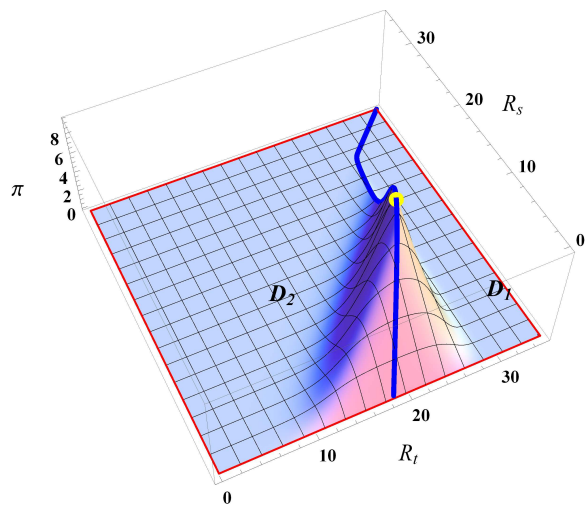}}			
	\caption{Optimal rate parameters for $N_J = 8$.}
	\label{fig:compare4_new} 
	\vspace*{-0.5cm}	
\end{figure}

\subsection{Throughput Performance with Multiple-Antenna Jammer}
Fig. \ref{fig:N-1_2} plots the optimal throughput from Proposition \ref{P4}. 
We also present the suboptimal performance which is achieved by the asymptotically optimal rate parameters obtained in Corollary \ref{C3}. 
We can see that the throughput increases with $\mathcal{P}_s$ unbounded. 
Also we can see that the suboptimal performance is reasonably good when $\mathcal{P}_s > 20$~dBm.

\begin{figure}[t]
	\renewcommand{\captionlabeldelim}{ }
	\renewcommand{\captionfont}{\small} \renewcommand{\captionlabelfont}{\small}
	\centering
	\subfigure[Throughput vs. source transmit power $\mathcal{P}_s$ for $N_J = 8$.]{
		\label{fig:N-1_2} 
		\includegraphics[scale=0.62]{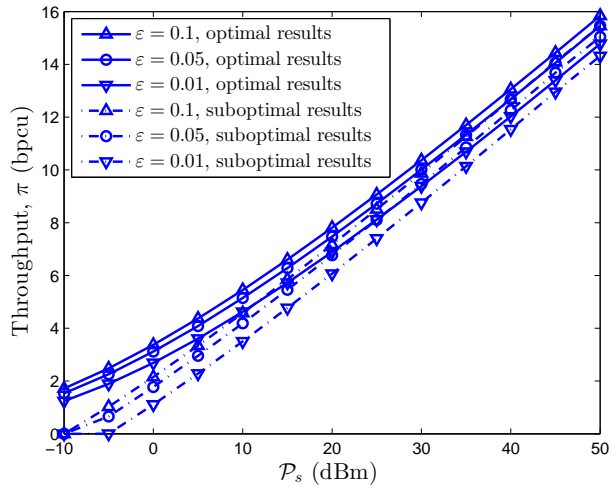}}
	\subfigure[Throughput vs. number of antennas at the jammer, $N_J \geq 2$.]{
		\label{fig:N-1} 
		\includegraphics[scale=0.62]{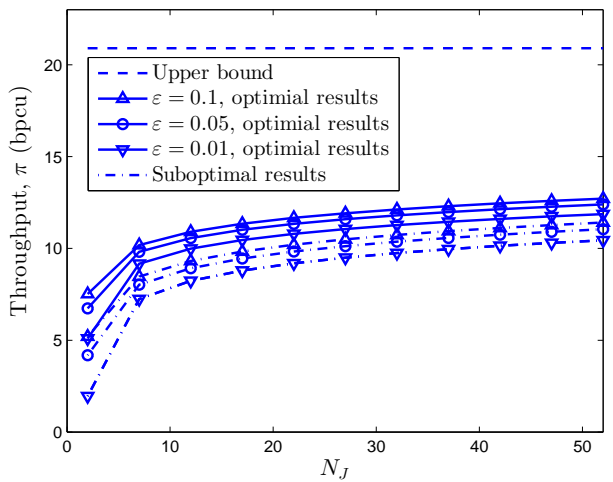}}	
	\vspace*{-0.2cm}
	\caption{Throughput for $N_J > 1$.}
	\label{fig:the new} 
	\vspace*{-0.5cm}	
\end{figure}

%

Fig. \ref{fig:N-1} plots the throughput achieved with the optimal design given in  Proposition \ref{P4} for different $N_J$. The source transmit power is $\mathcal{P}_s = 30$~dBm. 
We also include the suboptimal performance achieved by the asymptotically optimal rate parameters in the large $N_J$ regime (Corollary \ref{C2}) as well as the upper bound on throughput in Corollary \ref{C2}.

We can see that with the increment of $N_J$,
although theoretically the throughput is upper bounded as $N_J \rightarrow \infty$, the available throughput within practical range of $N_J$ is far from the upper bound. 
Hence, increasing $N_J$ is still an efficient way to improve the throughput with practical antenna size.
Also we can see that the suboptimal performance is acceptable but the gap from the optimal throughput performance is still noticeable.

\section{Conclusion}
In this paper, we investigated secure communication with the help from a wireless-powered jammer. 
We proposed a simple communication protocol and derived its achievable throughput with fixed-rate transmission. 
We further optimized the rate parameters to achieve the best throughput subject to a secrecy outage probability constraints. 
As energy harvesting and wireless power transfer become emerging solutions for energy constrained networks, this work has demonstrated how to make use of an energy constrained friendly jammer to enable secure communication without relying on an external energy supply. 
For future work, the protocol can be extended to include
more sophisticated adaptive transmission schemes,
such as variable power transmission with an average power constraint at the source.
Also these schemes can be generalized to multiple antennas at all nodes as well but with a certain constraint on the receiver noise level or the number of transmit/receive antennas at the jammer/eavesdropper (as needed in all physical layer security work).
We will explore these relevant problems in our further work. 
Also our design idea can be borrowed and apply other EH method, such as solar, vibration, thermoelectric, wind and even hybrid energy harvesting with several energy sources. However, apart from secure communication performance and EH efficiency, dimension requirements, implementation complexity, costs should be considered. 
Also our design idea can be borrowed and applied with other EH methods, such as solar, vibration, thermoelectric, wind, and even hybrid energy harvesting with several energy sources. However, apart from communication performance and EH efficiency, dimension requirements, implementation complexity and costs should also be taken into account in the design. 

\begin{appendices}
	
	\renewcommand{\theequation}{\thesection.\arabic{equation}} 
	\numberwithin{equation}{section}
	\vspace*{-0.2cm}
	\section{Proof of Lemma \ref{L1}}

		In one PT-IT cycle, once the available energy is higher than $\mathcal{P}_J T$, there will be $Y$ opportunistic PT blocks. 
		The probability of the discrete random variable $Y$ being $k$ is the probability that the successive $k$ blocks, suffer from connection outage of the $S \rightarrow D$ link, and the $\left(k+1\right)$th block does not have the $S \rightarrow D$ outage. 	
		Due to the i.i.d.  channel gains in different blocks, $Y$ follows a geometric distribution and the probability mass function (pmf) is given by
		\begin{equation} \label{Y=i}
		\mathbb{P}\left\lbrace Y=k \right\rbrace=p^k_{co} \left(1-p_{co}\right), k=0,1,....
		\end{equation}
		The mean value of $Y$ is given by
		\begin{equation} \label{EY}
		\mathbb{E}\left\lbrace Y \right\rbrace
		=\sum\limits_{k=0}^{\infty} k \mathbb{P}\left\lbrace Y=k \right\rbrace
		=\sum\limits_{k=0}^{\infty} k p^k_{co} \left(1-p_{co}\right)
		=\frac{p_{co}}{1-p_{co}}.
		\end{equation}

		As we have defined $\rho_J$ as the average harvested energy by one PT block, the average harvested energy by $Y$ opportunistic PT blocks in one PT-IT cycle is given by
		\begin{equation}
		\mathcal{E}_{Y}
		=\mathbb{E} \left\lbrace Y \right \rbrace \rho_J
		=\frac{p_{co}}{1-p_{co}} \rho_J.
		\end{equation}
		If the average harvested energy by opportunistic PT blocks in a PT-IT cycle is higher than the required energy, $\mathcal{P}_J T$, for jamming in one IT block, the communication process leads to energy accumulation. 
		Otherwise, we need dedicated PT blocks in some PT-IT cycles, and the communication process is energy balanced.
		Thus, we have the condition in Lemma 1.

\section{Proof of Theorem \ref{P1}}
We derive information transmission probability $p_{tx}$ in the following two cases.\par
\textbf{Energy Accumulation Case}:
In this case, there are no dedicated PT blocks.
We use a simple Markov chain with two states, IT and opportunistic PT, to model the communication process. 
When the fading channel of $S \rightarrow D$ link suffers connection outage, 
the block is in the opportunistic PT state, otherwise it is in the IT state. 
This Markov chain is ergodic since the fading channel of $S \rightarrow D$ link is i.i.d. between blocks. 
The information transmission probability is simply the probability that the $S \rightarrow D$ link does not suffer connection outage, hence we have
\begin{equation} \label{L2}
p_{tx} = 1-p_{co} = \frac{1}{1+\frac{p_{co}}{1-p_{co}}}.
\end{equation}

\textbf{Energy Balanced Case}:
In this case, the available energy at the jammer becomes directly relevant to whether a block is used for IT or PT. 
Following the recent works, such as \cite{Lee}, we model the energy state at the beginning/end of each time block as a Markov chain in order to obtain the information transmission probability. 
Since the energy state is continuous, we adopt Harris chain which can be treated as a Markov chain on a general state space (continuous state Markov chain). 

It is easy to show that this Harris chain is recurrent and aperiodic, because any current energy state can be revisited in some future block, 
and one cannot find any two energy states that the transition from one to the other is periodic. 
Therefore,
the Harris chain is ergodic \cite{Harris}, and there exists a unique stationary measure (stationary distribution), which means that
the stationary distribution of available energy at the beginning/end of each block exists. 
Thus, the stationary probability of a block being used for IT ($p_{tx}$) or PT exists. 
\par
Instead of deriving the stationary distribution of energy states, we use time averaging which makes use of the ergodic property, to calculate the information transmission probability $p_{tx}$ which is given by
\vspace*{-4pt}
 \begin{equation} \label{appen_p_tx}
 p_{tx}
 =
 \lim\limits_{\mathcal{N}_{total} \rightarrow \infty}
 \frac{\mathcal{N}_{IT}}{\mathcal{N}_{PT} + \mathcal{N}_{IT}}
=  \lim\limits_{\mathcal{N}_{total} \rightarrow \infty}
  \frac{1}{1+\mathcal{N}_{PT}/\mathcal{N}_{IT}}
 ,
 \end{equation}
where $\mathcal{N}_{IT}$ and $\mathcal{N}_{PT}$ denotes the number of IT and PT blocks in the communication process,
$\mathcal{N}_{total} \triangleq \mathcal{N}_{PT}+ \mathcal{N}_{IT}$.
By using the principle of conservation of energy (i.e., all the harvested energy in PT blocks  are used for jamming in IT blocks) and the law of large numbers, we have
\begin{equation} \label{N_ave}
\lim\limits_{\mathcal{N}_{total} \rightarrow \infty}
\frac{\mathcal{N}_{PT} \ \rho_J}{\mathcal{N}_{IT} \ \mathcal{P}_J T} = 1 ,
\end{equation}
where $\rho_J$ is the average harvested energy in one PT block defined in \eqref{my_rho} and $\mathcal{P}_J T$ is the energy used for jamming in one IT block. 
By taking \eqref{N_ave} into \eqref{appen_p_tx} the information transmission probability is given by
	\vspace*{-12pt}
\begin{equation} \label{L3}
p_{tx} = \frac{1}{1+\frac{\mathcal{P}_J T}{\rho_J}}.
\end{equation}
\par
\textbf{General Expression}: 
	Based on Lemma \ref{L1}, \eqref{L2} and \eqref{L3}, we can easily obtain the general expression for $p_{tx}$ as
	\vspace*{-5pt}
		\begin{equation}\label{p_tx_proof_1}
		p_{tx}
		= \frac{1}{1+ \max\left\lbrace \frac{\mathcal{P}_J T}{ \rho_J}, \frac{p_{co}}{1-p_{co}} \right\rbrace}.
		\end{equation}
	From \eqref{p_co}, we have,
	\begin{equation}\label{eqpco}
	p_{co}\! =\!\mathbb{P} \left\lbrace \log_2\left(1\!+\!\gamma_d\right) \!<\! R_t \right\rbrace
	\!=\!\mathbb{P}\left\lbrace \gamma_d \!<\! 2^{R_t}\!-\!1 \right\rbrace
	\!\!=\!F_{\gamma_d}\!\left(2^{R_t}\!-\!1\right)\!.
	\end{equation}
	By taking \eqref{eqgammad} into \eqref{eqpco}, we obtain the expression of $p_{co}$ in~\eqref{p_new_back}.

\section{Proof of Proposition \ref{P3}}	
	\textbf{Case I}: If optimal $\left(R_t,R_s\right) \in \mathcal{D}_1$,
	the optimization problem can be rewritten as	
	\begin{equation} \label{a<b}
	\max_{\left(R_t,R_s\right) \in \mathcal{D}_1} \ \
	\pi=\frac{R_s}
	{
		e^{\frac{\left(2^{R_t}-1\right)}{\rho_d}}
		\left(1+
		\frac{k_2 \left(2^{R_t} -1\right)}{2^{R_t-R_s}-1}
		\right)
	}.
	\end{equation}
	
	The optimal $\left(R_t,R_s\right)$ should satisfies
	$\frac{\partial \  \pi}{\partial \ \varsigma}=0$ and
	$\frac{\partial \!\  \pi}{\partial  R_s}=0$, where $\varsigma \triangleq 2^{R_t}$.
	
	Since $\varsigma$ only appears in the denominator of \eqref{a<b}, by taking the partial derivative of \eqref{a<b} about $\varsigma$, 
	\begin{equation} \label{partial 1‘}
	\frac{\partial \  \pi}{\partial \ \varsigma}=0
	\Leftrightarrow
	\frac
	{\partial
		\left(e^{\frac{\left(\varsigma-1\right)}{\rho_d}}
		\left(1+
		\frac{k_2 \left(\varsigma -1\right)}{\frac{\varsigma}{2^{R_s}}-1}
		\right)\right)
	}
	{\partial \varsigma}=0,
	\end{equation}
	which can be further expanded and simplified as
	\begin{equation} \label{middel_1}
	e^{\frac{\varsigma}{\rho_d}}
	\left(
	\frac{1}{\rho_d}\left(1+\frac{k_2 \left(\varsigma-1\right)}{\frac{\varsigma}{2^{R_s}}-1}\right)
	-\frac{k_2 \left(1-\frac{1}{2^{R_s}}\right)}{\left(\frac{\varsigma}{2^{R_s}}-1\right)^2}
	\right)=0.
	\end{equation}
	Because $e^{\frac{\varsigma}{\rho_d}} >0 $, \eqref{middel_1} is equivalent to
	\begin{equation} \label{middel_2}
	\begin{aligned}
		&\left(\frac{\varsigma}{2^{R_s}}-1\right)\!
		\left(
		\frac{\varsigma}{2^{R_s}}\!-\!1 \!+ \!k_2 2^{R_s}\left(\!\frac{\varsigma}{2^{R_s}}\!-\!1\!\right) \!+\! k_2 2^{R_s} \left(1\!-\!\frac{1}{2^{R_s}}\!\right)\!
		\right) \\
		&- \rho_d k_2 \left(1-\frac{1}{2^{R_s}}\right)
		=0.
	\end{aligned}
	\end{equation}
	By using $\xi \triangleq \frac{\varsigma}{2^{R_s}}-1$, \eqref{middel_2} can be further simplified as
	\begin{equation}
	\xi^2+\frac{k_2 2^{R_s} \left(1-\frac{1}{2^{R_s}}\right)}{1+k_2 2^{R_s}} \xi
	- \frac{\rho_d k_2 \left(1-\frac{1}{2^{R_s}}\right)}{1+k_2 2^{R_s}} = 0,
	\end{equation}
	which has a single positive root as (since $\xi > 0$)
	\begin{equation} \label{y}
	\!\xi\!\!=\!
	\frac{1}{2}\!\!
	\left(\!
	-\frac{k_2 \!\left(2^{R_s} \!-\!1 \right)}{1+k_2 2^{R_s}}
	\!+\!\!\left(\!\!
	\left(\!\!\frac{k_2 \left(2^{R_s} \!-\!1 \right)}{1+k_2 2^{R_s}}\!\!\right)^2
	\!\!\!\!+\!\frac{4 \rho_d k_2 \left( 1 \!-\!\frac{1}{2^{R_s}} \right)}{1+k_2 2^{R_s}}
	\!\right)^{\!\!\!\frac{1}{2}}
	\right)\!.
	\end{equation}
	
	Also we have
	\begin{equation} \label{Middel_3}
	\frac{\partial \  \pi}{\partial \  R_s}\!=\!
	\frac{\!\left(\!1\!+\!\frac{k_2 \left(2^{R_t}-1\right)}{2^{R_t\!-\!R_s}\!-\!1}\!\right)
		\!-\!R_s \left( \frac{\ln 2 \  k_2 \  2^{R_t-R_s} \left(2^{R_t} \!-\!1\right)}{\left(2^{R_t-R_s}-1\right)^2} \right)
	}
	{
		e^{\frac{2\left(2^{R_t}-1\right)}{\rho_d}}
		\left(1+
		\frac{k_2 \left(2^{R_t} -1\right)}{2^{R_t-R_s}-1}
		\right)^2
	}
	=0.
	\end{equation}
	Since the denominator of the middle term of \eqref{Middel_3} is greater than zero, \eqref{Middel_3} reduces to
	\begin{equation} \label{Middel_4}
	k_2
	\left(2^{R_s} +\frac{2^{R_s}-1}{\xi}\right)
	\left(\ln 2 \ R_s -1 + \frac{\ln 2 \ R_s}{\xi}\right)
	=1,
	\end{equation}
	where $k_2$ is defined in \eqref{def_k2}. \par
	Taking \eqref{y} into \eqref{Middel_4}, optimal $R_s$, $R^{\star}_s$ can be obtained easily by linear search, since the left side of \eqref{Middel_4} is monotonically increasing with $R_s$ which can be easily proved. 
	The optimal $R_t$ can be calculated as
	\begin{equation} \label{Middel_6}
	R^{\star}_t = R^{\star}_s + \log_2 \left(1+\xi^{\star}\right),
	\end{equation}
	where $\xi^{\star}$ can be obtained by taking $R^{\star}_s$ into \eqref{y}. \par

	\textbf{Case II}: If optimal $\left(R_t,R_s\right) \in \hat{\mathcal{D}} \cup \mathcal{D}_2$,
	\eqref{NAE_N=1} can be rewritten as
	\begin{equation} \label{a>b}
	\pi= \frac{R_s}{1+\frac{k_1}{2^{R_t-R_s}-1}}.
	\end{equation}
	Because $\pi$ in \eqref{a>b} increases with $R_t$, optimal $R_t$ and $R_s$ should be found at the boundary of $\mathcal{D}_1$ and $\mathcal{D}_2$, that is $\mathcal{\hat{D}}$.
	Letting $\left(a\right) = \left(b\right)$, we have
	\begin{equation} \label{Middel_5}
	1+\frac{k_1}{2^{R_t-R_s}-1} =e^{\frac{2^{R_t}-1}{\rho_d}}
	+ \frac{k_2 \ e^{\frac{2^{R_t}-1}{\rho_d}} \left(2^{R_t} -1\right)}{2^{R_t-R_s}-1},
	\end{equation}
	which can be further simplified as
		\vspace*{-5pt}
	\begin{equation} \label{R_s_temp1}
	2^{R_s} = \frac{2^{R_t}}{1+ \zeta},
	\end{equation}
	where
	\begin{equation} \label{U}
	\zeta= \frac{k_1-k_2 \ e^{\frac{2^{R_t}-1}{\rho_d}} \left(2^{R_t}-1\right)}{e^{\frac{2^{R_t}-1}{\rho_d}} - 1} > 0.
	\end{equation}
	Thus, from \eqref{R_s_temp1} we have
		\vspace*{-5pt}
	\begin{equation} \label{R_s_new}
	R_s =R_t -  \log_2 \left( 1+\zeta \right).
	\end{equation}
	By taking \eqref{R_s_new} into \eqref{a>b}, we have
	\vspace*{-5pt}
	\begin{equation} \label{pi_a=b}
	\pi = \frac{R_t - \log_2 \left(1+\zeta\right)}{1+\frac{k_1}{\zeta}}.
	\end{equation}
	By taking the derivative of $\pi$ about $R_t$ in \eqref{pi_a=b}, optimal $R_t$ should satisfy
	\begin{equation} \label{R_t_equation}
	\frac{
		\left(1\!- \!\frac{1}{\ln 2} \frac{\zeta'}{1+\zeta}\right) \left(1+\frac{k_1}{\zeta}\right)
		\!-\!
		\left(R_t \!-\! \log_2 \left(1+\zeta\right)\right)\left( \!- \frac{k_1}{\zeta^2} \right) \zeta'
	}
	{\left(1+\frac{k_1}{\zeta}\right)^2}\!=\!0,
	\end{equation}
	where
	\vspace*{-10pt}
	\begin{equation} \label{appendix_A_last}
	\zeta' \!\triangleq\! \frac{\mathrm{d} \zeta}{\mathrm{d} R_t} 
	\!=\!
	\frac{\ln 2 \ e^{\frac{2^{R_t}-1}{\rho_d}}}{\!\!\!\left(\!e^{\frac{2^{R_t}\!-\!1}{\rho_d}} \!-\!1\!\right)^2 }
	\!\left(\!
	k_2 \ 2^{R_t} \!\left(\!1\!+\!\frac{1}{\rho_d} \!-\! e^{\frac{2^{R_t}\!-\!1}{\rho_d}}\! \right) \!-\!\frac{k_1\!+\!k_2}{\rho_d}
	\!\right)\!.
	\end{equation}
	And \eqref{R_t_equation} can be further simplified as
	\begin{equation} \label{R_b_eq_new}
	\zeta' \left(
	\frac{1+\frac{k_1}{\zeta}}{\ln 2 \left(1+\zeta\right)}
	- \frac{k_1 \left(R_t - \log_2 \left(1+\zeta\right)\right)}{\zeta^2}
	\right)=1.
	\end{equation}
	
	Thus, $R^{\star}_t$ can be calculated as the solution of \eqref{R_b_eq_new}, and from \eqref{R_s_new}
	\begin{equation}
	R^{\star}_s = R^{\star}_t - \log_2\left(1+\zeta^{\star}\right),
	\end{equation}
	where $\zeta^{\star}$ is calculated by taking $R^{\star}_t$ into \eqref{U}.
	\par
Note that, 
if the optimal $\left(R_t, R_s\right)$ for problem \eqref{a<b} is obtained in region $\mathcal{D}_1$, they are the optimal rate parameters for problem \eqref{opt_new}. 
This is because, firstly, 
the above discussion and derivations show that the optimal rate parameters can only be obtained in region $\mathcal{D}_1$ and $\hat{\mathcal{D}}$.
Secondly,
by using the continuity of the function of throughput \eqref{NAE_N=1}, 
if the optimal $\left(R_t, R_s\right)$ for problem \eqref{a<b} are obtained in region $\mathcal{D}_1$, 
the maximal throughput in region $\mathcal{D}_1$ (i.e., the maximal value of the object function of \eqref{a<b} in $\mathcal{D}_1$), is larger than its boundary $\hat{\mathcal{D}}$. 
Thus, the optimal rate parameters are obtained and fall in region $\mathcal{D}_1$.
	
	\section{Proof of Corollary \ref{C1}}
	We consider the asymptotically high SNR regime, i.e., $\rho_d \rightarrow \infty$ or equivalently $\mathcal{P}_s \rightarrow \infty$. 
	
	When $\rho_d \rightarrow \infty$, we firstly assume $\left(R^{\star}_t,R^{\star}_s\right)$ is obtained in the region $\mathcal{D}_1$.
	The value of $R_s$ that satisfies \eqref{Middel_4_new} cannot go to infinity regardless of the value of $\xi$. Thus, we have $\xi \rightarrow \infty$ as $\rho_d \rightarrow \infty$, and \eqref{Middel_4_new} can be rewritten as
	\vspace*{-5pt}
	\begin{equation} \label{C1_new}
	k_2 2^{R_s} \left(\ln 2 R_s -1\right) =1,
	\end{equation}
	where $k_2$ is defined in \eqref{def_k2}.
	From \eqref{C1_new} optimal $R_s$ for the case $\rho_d \rightarrow \infty$ can be calculated as
	\begin{equation} \label{R_star_new}
	R^{\star}_s = \frac{1 + \mathrm{W}_0 \left(\frac{1}{k_2 2^{\frac{1}{\ln 2}}}\right)}{\ln 2}
	=\frac{1 + \mathrm{W}_0 \left(\frac{1}{e k_2}\right)}{\ln 2}.
	\end{equation}
	From \eqref{y_previous}, we know that $\xi = \mathcal{O} \left(\rho^{\frac{1}{2}}_d\right) = \mathcal{O} \left(\mathcal{P}^{\frac{1}{2}}_s\right)$,
	and because $\xi = \frac{2^{R_t}}{2^{R_s}} - 1$, we have $2^{R_t} =  \mathcal{O} \left(\mathcal{P}^{\frac{1}{2}}_s\right)$. 
	It can be easily verified that the assumption that optimal $\left(R_t,R_s\right) \in \mathcal{D}_1$ is correct. 
	From Proposition \ref{P3} and \eqref{NAE_N=1}, optimal $\left(R_t,R_s\right)$ and $\pi$ is obtained. 
	
	\section{Proof of Proposition \ref{P4}}
	Because $\left(a\right)$ in \eqref{NAE_N=1} decreases with $R_t$, while $\left(b\right)$ increases with $R_t$, 
	optimal $R_t$ can be obtained when the two parts become equal with each other, 
	i.e., optimal $\left(R_t,R_s\right) \in \hat{\mathcal{D}}$.
	Thus, optimization problem \eqref{opt_new} can be rewritten as 
	\vspace*{-0.3cm}
	{\small{
	\begin{equation} \label{p1}
	\begin{aligned}
	&\max_{R_t, R_s}\ \ 
	\frac{R_s}
	{1+\max\left\lbrace
		\frac{d^m_{SJ}}{N_J \eta} \frac{d^m_{JE}}{d^m_{SE}}\frac{\left(N_J-1\right)\left(\varepsilon^{-\frac{1}{N_J-1}}-1\right)}{2^{R_t-R_s}-1},
		e^{\frac{2^{R_t}-1}{\rho_d}}-1
		\right\rbrace}
	\\
    &\text{s.t.}\ 
	\frac{d^m_{SJ}}{N_J \eta} \frac{d^m_{JE}}{d^m_{SE}}\frac{\left(N_J\!-\!1\right)\left(\varepsilon^{-\frac{1}{N_J-1}}\!-\!1\right)}{2^{R_t-R_s}-1}
	=e^{\frac{2^{R_t}-1}{\rho_d}}-1,\ \!\!R_t \!\geq\! R_s \!\geq\! 0.\\	
	\end{aligned}
	\end{equation}}}
	
	By solving the equality constraint, we have
	\vspace*{-0.3cm}
	{\small{
	\begin{equation} \label{Rs}
	2^{R_s}=\frac{2^{R_t}}{1+\frac{M}{e^{\frac{2^{R_t}-1}{\rho_d}}-1}},
	\end{equation}}}
	where $M$ is defined in \eqref{my_M}.
	Certainly, $R_t \geq R_s$ is satisfied in \eqref{Rs}.
	By taking \eqref{Rs} into \eqref{p1},
	the optimization problem can be rewritten as
	{\small{
	\begin{equation} \label{p2}
	\begin{aligned}
	\max_{R_t \geq 0}\ \ \ &
	\frac{\log_2 \left(\frac{2^{R_t}}{1+\frac{M}{e^{\frac{2^{R_t}-1}{\rho_d}}-1}}\right)}
	{ \ e^{\frac{2^{R_t}-1}{\rho_d}}}.\\
	\end{aligned}
	\end{equation}}}
	
	Now we use $z$ to denote $2^{R_t}$, thus $R_t = \log_2 z$.
	By taking the derivative of objective function about $z$ in \eqref{p2}, and then setting it equal to $0$, optimal $z$, $z^{\star}$ can be calculated as the solution of \eqref{equal_previous} which is monotone decreasing function with $z$ on the left side. 

	\vspace*{-0.5cm}
	\section{Proof of Corollaries \ref{C3} and \ref{C2}}
		When $\rho_d \rightarrow \infty$, \eqref{equal_previous} approximates as
		$ 
		2\frac{\rho_d}{z} -\ln z = 0.
		$ 
		Thus, we have
		$ 
		z^{\star} = \frac{2 \rho_d}{\mathrm{W_0}\left(2 \rho_d\right)}.
		$ 
		From \eqref{NAE_N=1} and Proposition \ref{P4}, the Corollary \ref{C3} can be easily obtained.
		When $N_J \rightarrow \infty$, from \eqref{my_M}, we have
		$
		M=
		\frac{d^m_{SJ}}{N_J \eta} \frac{d^m_{JE}}{d^m_{SE}}{\left(N_J-1\right)\left(\varepsilon^{-\frac{1}{N_J-1}}-1\right)} \rightarrow 0.
		$
		Therefore, \eqref{equal_previous} approximates to
		$ 
		\frac{\rho_d}{z}-\ln z = 0.
		$ 
		Thus, we have the expression of optimal $z$ in $N_J \rightarrow \infty$ regime as
		$ 
		z^{\star} = e^{\mathrm{W}_0 \left(\rho_d\right)}.
		$ 
		From \eqref{NAE_N=1} and Proposition \ref{P4}, Corollary \ref{C2} can be easily obtained.
		\vspace*{-0.4cm}
\end{appendices}

\ifCLASSOPTIONcaptionsoff
  \newpage
\fi

\end{document}